\newcommand{\bu}{\mbox{\boldmath{$u$}}}
\newcommand{\bt}{\mbox{\boldmath{$t$}}}
\newcommand{\bcero}{\mbox{\boldmath{$0$}}}
\newcommand{\bx}{\mbox{\boldmath{$\bx$}}}
\newcommand{\by}{\mbox{\boldmath{$\by$}}}
\newcommand{\bv}{\mbox{\boldmath{$v$}}}
\newcommand{\var}{\varepsilon}
\newcommand{\ba}{\mbox{\boldmath{$a$}}}
\newcommand{\fb}{{{f}}}
\newcommand{\bg}{\mbox{\boldmath{$g$}}}
\renewcommand{\by}{\mbox{\boldmath{$y$}}}
\renewcommand{\bx}{\mbox{\boldmath{$x$}}}
\newcommand{\be}{\mbox{\boldmath{$e$}}}
\newcommand{\bn}{\mbox{\boldmath{$n$}}}
\newcommand{\bh}{{{h}}}
\newcommand{\bbf}{\mbox{\boldmath{$f$}}}
\newcommand{\bbh}{\mbox{\boldmath{$h$}}}
\newcommand{\btheta}{\mbox{\boldmath{$\theta$}}}
\newcommand{\beeta}{\mbox{\boldmath{$\eta$}}}
\newcommand{\bxi}{\mbox{\boldmath{$\xi$}}}
\newcommand{\bzero}{\mbox{\boldmath{$0$}}}
\newcommand{\bTheta}{\mbox{\boldmath{$\Theta$}}}
\newcommand{\bUcal}{\mbox{\boldmath{$\mathcal{U}$}}}
\newcommand{\Gamae}{\Gamma^{\varepsilon }}
\renewcommand{\d}{\partial}
\newcommand{\eij}{e_{i||j}}
\newcommand{\ekl}{e_{k||l}}
\newcommand{\eab}{e_{\alpha||\beta}}
\newcommand{\est}{e_{\sigma||\tau}}
\newcommand{\estres}{e_{\sigma||3}}
\newcommand{\eatres}{e_{\alpha||3}}
\newcommand{\edtres}{e_{3||3}}
\newcommand{\gab}{\gamma_{\alpha\beta}}
\newcommand{\gst}{\gamma_{\sigma\tau}}
\newcommand{\rab}{\rho_{\alpha\beta}}
\newcommand{\rst}{\rho_{\sigma\tau}}
\renewcommand{\a}{a^{\alpha\beta\sigma\tau}}
\newcommand{\aeps}{a^{\alpha\beta\sigma\tau,\varepsilon}}
\newcommand{\en}{ \ \textrm{in} \ }
\newcommand{\on}{ \ \textrm{on} \ }
\newcommand{\into}{\int_{\omega}}
\newcommand{\intO}{\int_{\Omega}}
\newcommand{\intG}{\int_{\Gamma_+}}
\newcommand{\ten}{(a^{\alpha \sigma}a^{\beta \tau} + a^{\alpha \tau}a^{\beta\sigma})}
\newcommand{\Cof}{\mbox{\rm Cof\,}}
\newcommand{\meas}{\mbox{\rm meas\,}}
\newtheorem{theorem}{Theorem}[section]
\newtheorem{lemma}[theorem]{Lemma}
\newtheorem{problem}[theorem]{Problem}
\newtheorem{remark}[theorem]{Remark}
\begin{document}

\title{Models of Elastic Shells in Contact with a Rigid Foundation: An Asymptotic Approach}

%

\author{\'A. Rodr\'{\i}guez-Ar\'os\thanks{angel.aros@udc.es}}
\affil{Departamento de M\'etodos Matem\'aticos de Representaci\'on, Universidade da Coru\~na, Paseo de Ronda 51, 15011 A Coru\~na, Spain}




\maketitle

\begin{abstract}
We consider a family of linearly elastic shells with thickness $2\var$ (where $\var$ is a small parameter). The shells are clamped along a portion of their lateral face, all having the same middle surface $S$, and may enter in contact with a rigid foundation along the bottom face.

We are interested in studying the limit behavior of both the three-dimen\-sio\-nal problems, given in curvilinear coordinates, and their solutions (displacements $\bu^\var$ of covariant components $u_i^\var$) when $\var$ tends to  zero. To do that, we use asymptotic analysis methods.
On one hand, we find that if the applied body force density is $O(1)$ with respect to $\var$ and surface tractions density is $O(\var)$, a suitable approximation of the variational formulation of the contact problem is a two-dimensional variational inequality which can be identified as the variational formulation of the obstacle problem for an elastic membrane. On the other hand, if the applied body force density is $O(\var^2)$ and surface tractions density is $O(\var^3)$, the corresponding approximation is a different two-dimensional inequality which can be identified as the variational formulation of the obstacle problem for an elastic flexural shell. {We finally discuss the existence and uniqueness of solution for the limit two-dimensional variational problems found}.

\end{abstract}

{\bf Keywords:}

Asymptotic Analysis, Elasticity, Shells, Membrane, Flexural, Contact, Obstacle, Rigid Foundation, Signorini

{\bf MSC:} 41A60, 35Q74, 74K25, 74K15, 74B05, 74M15

\section{Introduction}
\label{intro}

In solid mechanics, the obtention of  models for rods, beams,
plates and shells is based on {\it a priori} hypotheses on the
displacement and/or stress fields which, upon substitution in the three-dimensional
equilibrium and constitutive equations, lead to useful simplifications. Nevertheless, from both
constitutive and geometrical point of views, there is a need to
justify the validity of most of the models obtained in this way.

For this reason a considerable effort has been made in the past decades by many authors in order
to derive new models and justify the existing ones by
using the asymptotic expansion method, whose foundations can be
found in \cite{Lions}. Indeed, the first applied results were obtained with the justification of the linearized theory of plate bending in \cite{CD,Destuynder}.

The theories of beam bending and rod stretching also benefited from the extensive use of asymptotic methods and so the justification of the Bernoulli-Navier model for the bending-stretching of elastic thin rods was provided in \cite{BV}. { Later}, the nonlinear case was studied
in \cite{CGLRT2} and the analysis and error estimation of higher-order terms in the asymptotic
expansion of the scaled unknowns was given in \cite{iv}. In \cite{TraViano}, the authors use the asymptotic method to justify the Saint-Venant, Timoshenko and Vlassov models of elastic beams.

A description of the  mathematical models for the three-dimensional elasticity, including the nonlinear aspects, together with a mathematical analysis of these models, can be found in \cite{Ciarlet2}. A justification of the two-dimensional equations of a linear plate can be found in  \cite{CD}.
An extensive review concerning  plate models can be found in \cite{Ciarlet3}, which also contains the justification of  the models by using asymptotic methods. The existence and uniqueness of solution of elliptic membrane shell equations, can be found in \cite{CiarletLods} and in \cite{CiarletLods2}.  These two-dimensional models are completely justified with convergence theorems. A complete theory regarding  elastic shells can be found in \cite{Ciarlet4b}, where models  for elliptic membranes, generalized membranes and flexural shells are presented. It contains a full description of the asymptotic procedure that leads to the corresponding sets of two-dimensional equations.

{ In the last decade, asymptotic methods have also been used to derive and justify contact models for beams (see \cite{Viano1,VAS_wear,ASV15}), plates (see \cite{Miara16}) and shallow shells \cite{LM1,LM2,LM}. Contact phenomena involving deformable bodies abound in industry and everyday life.} The contact of braking pads with wheels, a tire with a road, a piston with a skirt, a shoe with a floor, are just a few simple examples. For this reason, consi\-derable effort has been made with the modelling, analysis and numerical approximation of contact problems, and the engineering literature concerning this topic is rather extensive. An early attempt to the study of frictional contact problems within the framework of variational inequalities was made in \cite{DL}. Comprehensive references on analysis and numerical approximation of variational inequalities arising from contact problems include \cite{HS,HHNL,KO}. Also, from the numerical point of view, many algorithms have been developed in order to deal with the nonlinearities due to the contact conditions. Three main approaches are usually found in the literature; duality methods, combined with fixed point techniques (see \cite{BM}, \cite{GLT}), penalty methods (see \cite{KO}) and generalized Newton methods (see \cite{Kunisch}).  Mathematical, mechanical and numerical state of the art on the Contact Mechanics can be found in the proceedings \cite{MM,RJM}, in the special issue \cite{Sh} and in the monograph \cite{SST}, as well.

{ In the present paper we contribute to continue the pioneering work developed in \cite{LM1,LM2,LM} by removing the definitional restrictions of the shallow shells model. Moreover, in these works, the obstacle is a plane, while here we do not impose any particular shape. As a consequence, in this paper we find different limit models. Moreover, these limit models naturally describe the behaviour of elliptic membranes and flexural shells when in unilateral contact with a rigid foundation. To be precise the limit models can be identified as ``obstacle'' problems, since now the restrictions are to be imposed in the domain.}

{ Specifically,} we analyze the asymptotic behavior of the scaled three-dimen\-sional displacement field of an elastic shell in contact with a rigid foundation as the thickness of the shell approaches zero. We consider that the displacements vanish in a portion of the lateral face of the shell, obtaining the variational inequalities of an elastic membrane shell in unilateral contact, or of an elastic flexural shell in unilateral contact, depending on the order of the forces and the geometry. We will follow the notation and style of \cite{Ciarlet4b}, where the linear elastic shells without any contact conditions are studied. For this reason, we shall  reference auxiliary results which apply in the same manner to the unilateral contact case. One of the major differences with respect to previous works in elasticity without contact, consists in that the various problems are formulated by using inequalities instead of equalities. That will lead to a series of non trivial difficulties that need to be solved in order to find the zeroth-order approach of the solution.

Since the notation is well known in the community, let us advance the models obtained, for the benefit of the reader, { the details to be provided or referenced conveniently in the following sections}. Let $\omega$ be a domain in $\mathbb{R}^2$, the image of which by a sufficiently smooth function into $\mathbb{R}^3$ represents the middle surface $S$ of a shell of thickness $2\var$ in its natural state. This shell is made of an elastic homogeneous and isotropic material, it is fixed on a part of its lateral boundary and it is under the influence of volume forces and surface forces on its upper face. Additionally, the shell is in unilateral frictionless contact with a rigid foundation on the bottom face. Under these circumstances, we find the following models for describing the mechanical behaviour of the shell:

\paragraph{Elastic elliptic membrane in unilateral contact: Variational formulation.} Assume that $S$ is elliptic and the shell is fixed on the whole lateral face. Then:

Find $\bxi^\var=(\xi^\var_i):\omega\subset\mathbb{R}^2 \longrightarrow \mathbb{R}^3$ such that,
\begin{align*}
&\bxi^\var\in K_M(\omega):=\{\beeta=(\eta_i)\in H_0^1(\omega)\times H_0^1(\omega)\times L^2(\omega);\, \eta_3\ge0 \ \textrm{in} \ \omega \},\\
&\quad\var\int_{\omega} \aeps\gst(\bxi^\var)\gab(\beeta-\bxi^\var)\sqrt{a}dy\\
&\qquad   \ge\int_{\omega}p^{i,\var}(\eta_i-\xi_i^\var)\sqrt{a}dy \ \forall \beeta=(\eta_i)\in K_M(\omega),
\end{align*}
where the two-dimensional fourth-order elasticity tensor and the linearized change of metric tensor and applied forces are given, respectively as follows:
\begin{align*}
&a^{\alpha\beta\sigma\tau,\var}:=\frac{4\lambda\mu}{\lambda + 2\mu}a^{\alpha\beta}a^{\sigma\tau} + 2\mu \ten,\\
&\gab(\beeta):= \frac{1}{2}(\d_\alpha\eta_\beta + \d_\beta\eta_\alpha) - \Gamma_{\alpha\beta}^\sigma\eta_\sigma -b_{\alpha\beta}\eta_3,   \\
& p^{i,\var}:=\int_{-\var}^{\var}\fb^{i,\var}dx_3^\var +h_+^{i,\var}, \quad h_{+}^{i,\var}=\bh^{i,\var}(\cdot,\var).
\end{align*}

\paragraph{Elastic flexural shell in unilateral contact: Variational formulation.} Assume that the set $K_F(\omega)$ introduced below contains non trivial functions. Then:

Find $\bxi^\var=(\xi^\var_i):\omega\subset\mathbb{R}^2  \longrightarrow \mathbb{R}^3$ such that,
\begin{align*}
&\bxi^\var\in K_F(\omega):=\{ \beeta=(\eta_i) \in H^1(\omega)\times H^1(\omega)\times H^2(\omega) ;\\
&\qquad\qquad\qquad\qquad \eta_i=\d_\nu \eta_3=0 \en \gamma_0\subset\partial\omega,\ \gab(\beeta)=0,\ \eta_3\ge0 \en \omega\},\\
&\quad\frac{\var^3}{3}\int_{\omega} \aeps\rst(\bxi^\var)\rab(\beeta-\bxi^\var)\sqrt{a}dy\\
&\qquad\qquad\qquad\ge\int_{\omega}p^{i,\var}(\eta_i-\xi_i^\var)\sqrt{a}dy \ \forall \beeta=(\eta_i)\in K_F(\omega),
\end{align*}
where the two-dimensional linearized change of curvature tensor is given by:
\begin{align*}
&\rho_{\alpha\beta}(\beeta):= \d_{\alpha\beta}\eta_3 - \Gamma_{\alpha\beta}^\sigma \d_\sigma\eta_3 - b_\alpha^\sigma b_{\sigma\beta} \eta_3 + b_\alpha^\sigma (\d_\beta\eta_\sigma- \Gamma_{\beta\sigma}^\tau \eta_\tau)\\
&\qquad\qquad+ b_\beta^\tau(\d_\alpha\eta_\tau-\Gamma_{\alpha\tau}^\sigma\eta_\sigma ) + b^\tau_{\beta|\alpha} \eta_\tau.
\end{align*}

The structure of the paper is the following: in Section \ref{problema} we shall describe the variational and mechanical formulations of the problem in cartesian coordinates in the original domain and we reformulate the variational formulation in curvilinear coordinates. In Section \ref{seccion_dominio_ind} we will use a projection map into a reference domain { independent of $\var$} and we will introduce the scaled unknowns and forces as well as the assumptions on coefficients. We also devote this section to recall and derive results which will be needed in what follows. In Section \ref{procedure} we show the asymptotic analysis leading to the formulation of the variational inequalities of the elastic shells in unilateral contact. In Section \ref{Existencia} we show the re-scaled versions, with true physical meaning, of the variational formulations of the problems of elastic shells in unilateral contact, classified attending to their boundary conditions and the geometry of the middle surface $S$, { and discuss the existence and uniqueness of solution}.  Finally, in Section \ref{conclusiones} we shall present some conclusions and { describe the future work, namely the obtention of convergence results, to be provided in following papers}.

\section{The three-dimensional elastic shell contact problem} \label{problema}

We denote $\mathbb{S}^d$, where $d=2,3$ in practice, the space of second-order symmetric tensors on $\mathbb{R}^d$, while \textquotedblleft$\ \cdot$ \textquotedblright will represent the inner product and $|\cdot|$  the usual norm in $\mathbb{S}^d$ and  $\mathbb{R}^d$. In  what follows, unless the contrary is explicitly written, we will use summation convention on repeated indices. Moreover, Latin indices $i,j,k,l,...$, take their values in the set $\{1,2,3\}$, whereas Greek indices $\alpha,\beta,\sigma,\tau,...$, do it in the set  $\{1,2\}$. Also, we use standard notation for the Lebesgue and Sobolev spaces. Let $\omega$ be a domain of $\mathbb{R}^2$, with a Lipschitz-continuous boundary $\gamma=\partial\omega$. Let $\by=(y_\alpha)$ be a generic point of  its closure $\bar{\omega}$ and let $\d_\alpha$ denote the partial derivative with respect to $y_\alpha$.

Let $\btheta\in\mathcal{C}^2(\bar{\omega};\mathbb{R}^3)$ be an injective mapping such that the two vectors $\ba_\alpha(\by):= \d_\alpha \btheta(\by)$ are linearly independent. These vectors form the covariant basis of the tangent plane to the surface $S:=\btheta(\bar{\omega})$ at the point $\btheta(\by).$ We can consider the two vectors $\ba^\alpha(\by)$ of the same tangent plane defined by the relations $\ba^\alpha(\by)\cdot \ba_\beta(\by)=\delta_\beta^\alpha$, that constitute its contravariant basis. We define
\begin{align}\label{a_3}
\ba_3(\by)=\ba^3(\by):=\frac{\ba_1(\by)\wedge \ba_2(\by)}{| \ba_1(\by)\wedge \ba_2(\by)|},
\end{align}
the unit normal vector to $S$ at the point $\btheta(\by)$, where $\wedge$ denotes vector product in $\mathbb{R}^3.$

We can define the first fundamental form, given as metric tensor, in covariant or contravariant components, respectively, by
\begin{equation*}
a_{\alpha\beta}:=\ba_\alpha\cdot \ba_\beta, \qquad a^{\alpha\beta}:=\ba^\alpha\cdot \ba^\beta,
\end{equation*}
  the second fundamental form, given as curvature tensor, in covariant or mixed components, respectively, by
\begin{equation}\label{2.1c}
b_{\alpha\beta}:=\ba^3 \cdot \d_\beta \ba_\alpha, \qquad b_{\alpha}^\beta:=a^{\beta\sigma}\cdot b_{\sigma\alpha},
\end{equation}
and the Christoffel symbols of the surface $S$ as
\begin{equation}\label{2.1d}
\Gamma^\sigma_{\alpha\beta}:=\ba^\sigma\cdot \d_\beta \ba_\alpha.
\end{equation}
The area element along $S$ is $\sqrt{a}dy$ where
\begin{equation}\label{definicion_a}
a:=\det (a_{\alpha\beta}).
\end{equation}
Let $\gamma_0$ be a subset  of  $\gamma$, such that $\meas (\gamma_0)>0$.
For each $\varepsilon>0$, we define the three-dimensional domain $\Omega^\varepsilon:=\omega \times (-\varepsilon, \varepsilon)$ and  its boundary $\Gamae=\partial\Omega^\var$. We also define  the following parts of the boundary,
\begin{align*}
\Gamma^\varepsilon_+:=\omega\times \{\varepsilon\}, \quad \Gamma^\varepsilon_C:= \omega\times \{-\varepsilon\},\quad \Gamma_0^\varepsilon:=\gamma_0\times[-\varepsilon,\varepsilon].
\end{align*}

Let $\bx^\varepsilon=(x_i^\varepsilon)$ be a generic point of $\bar{\Omega}^\varepsilon$ and let $\d_i^\var$ denote the partial derivative with respect to $x_i^\varepsilon$. Note that $x_\alpha^\varepsilon=y_\alpha$ and $\d_\alpha^\varepsilon =\d_\alpha$. Let $\bTheta:\bar{\Omega}^\varepsilon\rightarrow \mathbb{R}^3$ be the mapping defined by
\begin{align} \label{bTheta}
\bTheta(\bx^\varepsilon):=\btheta(\by) + x_3^\varepsilon \ba_3(\by) \ \forall \bx^\varepsilon=(\by,x_3^\varepsilon)=(y_1,y_2,x_3^\varepsilon)\in\bar{\Omega}^\varepsilon.
\end{align}
The next theorem shows that if the injective mapping $\btheta:\bar{\omega}\rightarrow\mathbb{R}^3$ is smooth enough, the mapping $\bTheta:\bar{\Omega}^\var\rightarrow\mathbb{R}^3$ is also injective for $\var>0$ small enough (see Theorem 3.1-1, \cite{Ciarlet4b}) and the vectors $\bg_i^\varepsilon(\bx^\varepsilon):=\d_i^\varepsilon\bTheta(\bx^\varepsilon)$ are linearly independent.

In what follows, and for the sake of briefness, we shall omit the explicit dependence on the space variable when there is no ambiguity.

\begin{theorem}\label{var_0}
Let $\omega$ be a domain in $\mathbb{R}^2$. Let $\btheta\in\mathcal{C}^2(\bar{\omega};\mathbb{R}^3)$ be an injective mapping such that the two vectors $\ba_\alpha=\d_\alpha\btheta$ are linearly independent at all points of $\bar{\omega}$ and let $\ba_3$,  defined  in (\ref{a_3}). Then there exists $\var_0>0$ such that   the mapping $\bTheta:\bar{\Omega}_0 \rightarrow\mathbb{R}^3$ defined by
\begin{align*}
\bTheta(\by,x_3):=\btheta(\by) + x_3 \ba_3(\by) \ \forall (\by,x_3)\in\bar{\Omega}_0, \ \textrm{where} \ \Omega_0:=\omega\times(-\var_0,\var_0),
\end{align*}
is a $\mathcal{C}^1-$ diffeomorphism from $\bar{\Omega}_0$ onto $\bTheta(\bar{\Omega}_0)$ and $\det (\bg_1,\bg_2,\bg_3)>0$ in $\bar{\Omega}_0$, where $\bg_i:=\d_i\bTheta$.
\end{theorem}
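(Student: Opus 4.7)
The plan is to combine an infinitesimal computation of the Jacobian at $x_3=0$ with a compactness argument to upgrade local invertibility to global injectivity on a slab $\bar{\Omega}_{\var_0}$.

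First I would compute $\bg_i=\d_i\bTheta$ from the explicit expression $\bTheta(\by,x_3)=\btheta(\by)+x_3\ba_3(\by)$. This yields $\bg_\alpha=\ba_\alpha+x_3\d_\alpha \ba_3$ and $\bg_3=\ba_3$. At $x_3=0$ we have $\bg_i(\by,0)=\ba_i(\by)$, hence $\det(\bg_1,\bg_2,\bg_3)(\by,0)=\det(\ba_1,\ba_2,\ba_3)(\by)=\sqrt{a(\by)}$. Since $\btheta\in\mathcal{C}^2(\bar{\omega};\mathbb{R}^3)$ with $\ba_1,\ba_2$ linearly independent everywhere on $\bar{\omega}$, the function $\by\mapsto\sqrt{a(\by)}$ is continuous and strictly positive on the compact set $\bar{\omega}$, so it admits a positive lower bound. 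By continuity of $(\by,x_3)\mapsto\det(\bg_1,\bg_2,\bg_3)(\by,x_3)$ on $\bar{\omega}\times[-1,1]$ (using the $\mathcal{C}^1$ smoothness of $\ba_3$ inherited from $\btheta\in\mathcal{C}^2$), there exists $\var_1>0$ such that $\det(\bg_1,\bg_2,\bg_3)>0$ uniformly on $\bar{\Omega}_{\var_1}=\bar{\omega}\times[-\var_1,\var_1]$.

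Next, the positivity of this determinant together with the $\mathcal{C}^1$ regularity of $\bTheta$ allows, at every point $(\by,x_3)\in\bar{\Omega}_{\var_1}$, an application of the inverse function theorem: there is an open neighborhood on which $\bTheta$ is a $\mathcal{C}^1$-diffeomorphism onto its image. The only remaining issue is to secure global injectivity on some slab $\bar{\Omega}_{\var_0}$ with $\var_0\le\var_1$.

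The hard part is this global injectivity, which I would handle by a compactness argument by contradiction. Suppose no such $\var_0$ exists; then for every $n\in\mathbb{N}$ one can find distinct points $(\by_n,x_{3,n})\ne(\by'_n,x'_{3,n})$ in $\bar{\omega}\times[-1/n,1/n]$ with $\bTheta(\by_n,x_{3,n})=\bTheta(\by'_n,x'_{3,n})$. By compactness of $\bar{\omega}\times[-1,1]$, up to extraction, $(\by_n,x_{3,n})\to(\by^*,0)$ and $(\by'_n,x'_{3,n})\to({\by'}^*,0)$. Passing to the limit in $\bTheta(\by_n,x_{3,n})=\bTheta(\by'_n,x'_{3,n})$ gives $\btheta(\by^*)=\btheta({\by'}^*)$, and the injectivity of $\btheta$ on $\bar{\omega}$ forces $\by^*={\by'}^*$. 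Thus both sequences eventually lie in a single neighborhood of $(\by^*,0)$ on which, by the previous step, $\bTheta$ is a local $\mathcal{C}^1$-diffeomorphism, hence injective; this contradicts $(\by_n,x_{3,n})\ne(\by'_n,x'_{3,n})$ for large $n$. Therefore a suitable $\var_0\in(0,\var_1]$ exists.

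Finally, on $\bar{\Omega}_{\var_0}$ the mapping $\bTheta$ is continuous, injective and defined on a compact set, so it is a homeomorphism onto its image; combined with nonvanishing Jacobian everywhere, the local inverses coming from the inverse function theorem glue into a global $\mathcal{C}^1$ inverse, and $\bTheta:\bar{\Omega}_{\var_0}\to\bTheta(\bar{\Omega}_{\var_0})$ is the desired $\mathcal{C}^1$-diffeomorphism with $\det(\bg_1,\bg_2,\bg_3)>0$ throughout. The main obstacle, as anticipated, is the global injectivity step; everything else reduces to continuity and the inverse function theorem.
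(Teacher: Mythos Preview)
Your argument is correct and is, in fact, the standard proof of this result. The paper does not give its own proof of this theorem; it simply states the result and refers the reader to Theorem~3.1-1 in Ciarlet's monograph \cite{Ciarlet4b}. What you have written is essentially the argument one finds there: compute the Jacobian explicitly, use compactness of $\bar{\omega}$ to get a uniform positive lower bound on $\det(\bg_1,\bg_2,\bg_3)$ near $x_3=0$, then promote local invertibility to global injectivity on a thin slab by a compactness-and-contradiction argument exploiting the injectivity of $\btheta$ on $\bar{\omega}$. So there is nothing to compare against in the paper itself, and your proposal matches the standard literature proof.
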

For each $\var$, $0<\var\le\var_0$, the set $\bTheta(\bar{\Omega}^\var)$ is the reference configuration of an elastic shell, with middle surface $S=\btheta(\bar{\omega})$ and thickness $2\varepsilon>0$. By the theorem above, the mapping $\bTheta:\bar{\Omega}^\varepsilon\rightarrow \mathbb{R}^3$ is injective for all $\var$, $0<\var\le\var_0$ and, moreover, the three vectors $\bg_i^\varepsilon(\bx^\varepsilon)$ form the covariant basis at the point $\bTheta(\bx^\varepsilon)$, and $\bg^{i,\varepsilon}(\bx^\varepsilon) $ defined by the relations
\begin{equation}\label{gis}
\bg^{i,\varepsilon}\cdot \bg_j^\varepsilon=\delta_j^i,
\end{equation}
form the contravariant basis at the point $\bTheta(\bx^\varepsilon)$. The covariant and contravariant components of the metric tensor are defined, respectively, as
\begin{align*}
 g_{ij}^\varepsilon:=\bg_i^\varepsilon \cdot \bg_j^\varepsilon,\quad g^{ij,\varepsilon}:=\bg^{i,\varepsilon} \cdot \bg^{j,\varepsilon},
\end{align*}
and Christoffel symbols as
\begin{align} \label{simbolos3D}
\Gamma^{p,\varepsilon}_{ij}:=\bg^{p,\varepsilon}\cdot\d_i^\varepsilon \bg_j^\varepsilon. 
\end{align}

The volume element in the set $\bTheta(\bar{\Omega}^\varepsilon)$ is $\sqrt{g^\varepsilon}dx^\var$ and the surface element in $\bTheta(\Gamma^\varepsilon)$ is $\sqrt{g^\varepsilon}d\Gamae$  where
\begin{align} \label{g}
g^\varepsilon:=\det (g^\varepsilon_{ij}).
\end{align}
Let $\bn^\varepsilon(\bx^\varepsilon)$ denote the unit outward normal vector on $\bx^\varepsilon\in\Gamma^\var$ and  $\hat{\bn}^\varepsilon(\hat{\bx}^\var)$ the unit outward normal vector on $\hat{\bx}^\var=\bTheta(\bx^\varepsilon)\in\bTheta(\Gamma^\varepsilon)$. It is verified that (see, \cite[p. 41]{Ciarlet2})
$$
\hat{\bn}^\var(\hat{\bx}^\var)=\frac{\Cof(\nabla\bTheta(\bx^\var))\bn^\var(\bx^\var)}{|\Cof(\nabla\bTheta(\bx^\var))\bn^\var(\bx^\var)|}.
$$
We are particularly interested in the normal components of vectors on $\bTheta(\Gamma_C^\var)$. Recall that on $\Gamma_C^\var$, it is verified that $\bn^\var=(0,0,-1)$. Also, note that from (\ref{bTheta}) we deduce that $\bg_3^\var=\bg^{3,\var}=\ba_3$ and therefore $g^{33,\var}=|\bg^{3,\var}|=1$. These arguments imply that, in particular,
\begin{equation}\label{normal}
\hat{\bn}^\var(\hat{\bx}^\var)=-\bg_3(\bx^\var)=-\ba_3(\by),\ {\rm where}\ \hat{\bx}^\var=\bTheta(\bx^\var),\ {\rm and}\ \bx^\var=(\by,-\var)\in\Gamma_C^\var.
\end{equation}
Now, for a field $\hat{\bv}^\var$ defined in $\bTheta(\bar{\Omega}^\var)$, where the cartesian basis is denoted by $\{\hat{\be}^i\}_{i=1}^3$, we define its covariant curvilinear coordinates $(v_i^\var)$ in $\bar{\Omega}^\var$ as follows:
\begin{equation}\label{normal2}
\hat{\bv}^\var(\hat{\bx}^\var)=\hat{v}^\var_i(\hat{\bx}^\var)\hat{\be}^i=:v_i^\var(\bx^\var)\bg^{i,\var}(\bx^\var),\ {\rm with}\ \hat{\bx}^\var=\bTheta(\bx^\var).
\end{equation}
Therefore, by combining (\ref{gis}), (\ref{normal}) and (\ref{normal2}), it can be shown that, on $\Gamma_C^\var$, we have
$$
\hat{v}_n:=\hat{\bv}^\var\cdot\hat{\bn}^\var=(\hat{v}_i^\var\hat{n}^{i,\var})=(\hat{v}_i^\var\hat{\be}^i)\cdot(-\bg_3)%
=(v_i^\var\bg^{i,\var})\cdot(-\bg_3)=-v_3^\var.
$$
Also, since $v_i^\var n^{i,\var}=-v_3^\var$ on $\Gamma_C^\var$, it is verified in particular that
\begin{equation}\label{normal3}
\hat{v}_n=(\hat{v}_i^\var\hat{n}^{i,\var})=v_i^\var n^{i,\var}=-v_3^\var.
\end{equation}

We assume that $\bTheta(\bar{\Omega}^\varepsilon)$ is a natural state  of a shell made of an elastic material, which is homogeneous and isotropic, so that the material is characterized by its Lam\'e coefficients   $\lambda\geq0, \mu>0$. We assume that these constants are independent of $\var$.

We also assume that the shell is subjected to a boundary condition of place; in particular, the displacements field vanishes in  $\bTheta(\Gamma_0^\varepsilon)$,  this is,  a portion of the lateral face of the shell.

Further, under the effect of applied volumic forces of density $\hat{\bbf}^\var=(\hat{f}^{i,\var})$ acting in $\bTheta(\Omega^\var)$ and tractions of density $\hat{\bbh}^\var=(\hat{h}^{i,\var})$ acting upon $\bTheta(\Gamma^\var_+)$, the elastic shell is deformed and may enter in contact with a rigid foundation which, initially, is at a known distance $s^\varepsilon$ measured along the direction of $\hat{\bn}^\var$ on $\bTheta(\Gamma_C^\var)$. For simplicity, we take $s^\var=0$ in the following. Besides, in the following we shall use the shorter notation $\hat{\Omega}^\var=\bTheta(\Omega^\var)$, $\hat{\Gamma}^\var=\bTheta(\Gamma^\var)$. It is well known that the variational formulation of the unilateral contact problem in cartesian coordinates is the following:

\begin{problem}\label{problema_cart}
Find $\hat{\bu}^\varepsilon=(\hat{u}_i^\varepsilon):\hat{\Omega}^\varepsilon \rightarrow \mathbb{R}^3$  such that,
\begin{align*}
& \hat{\bu}^\varepsilon\in K(\hat{\Omega}^\var):=\{\hat{\bv}^\varepsilon=(\hat{v}_i^\varepsilon)\in [H^1(\hat{\Omega}^\var)]^3; \hat{\bv}^\varepsilon=\mathbf{\bcero} \ {\rm on} \ \hat{\Gamma}_0^\var; \hat{v}_n\le0\ {\rm on}\ \hat{\Gamma}_C^\var\},\\
&\int_{\hat{\Omega}^\var}\hat{A}^{ijkl,\varepsilon}\hat{e}^\varepsilon_{kl}(\hat{\bu}^\varepsilon)%
   (\hat{e}^\varepsilon_{ij}(\hat{\bv}^\varepsilon)-\hat{e}^\varepsilon_{ij}(\hat{\bu}^\varepsilon)) d\hat{x}^\varepsilon\\%
&\quad   \ge \int_{\hat{\Omega}^\var} \hat{f}^{i,\varepsilon} (\hat{v}_i^\varepsilon-\hat{u}_i^\varepsilon)\,d\hat{x}^\varepsilon %
   +\int_{\hat{\Gamma}_+^\var} \hat{h}^{i,\varepsilon} (\hat{v}_i^\varepsilon-\hat{u}_i^\varepsilon)\,d\hat{\Gamma}^\var  \quad \forall \hat{\bv}^\varepsilon\in K(\hat{\Omega}^\var),
  \end{align*}
\end{problem}
where
$$
\hat{A}^{ijkl,\varepsilon}=\lambda\delta^{ij}\delta^{kl}+\mu(\delta^{ik}\delta^{jl}+\delta^{il}\delta^{jk}),\quad
\hat{e}^\var_{ij}(\hat{\bv}^\var)=\frac{1}{2}(\hat{\partial}_j\hat{v}_i^\var+\hat{\partial}_i\hat{v}_j^\var),
$$
denote the elasticity fourth-order tensor and the deformation operator, respectively. We now define the corresponding contravariant components in curvilinear coordinates for the applied forces densities:
$$
\hat{f}^{i,\var}(\hat{\bx}^\var)\hat{\be}_i=:f^{i,\var}(\bx^\var)\bg_i^\var(\bx^\var),\quad%
\hat{h}^{i,\var}(\hat{\bx}^\var)\hat{\be}_id\hat{\Gamma}^\var=:h^{i,\var}(\bx^\var)\bg_i^\var(\bx^\var)\sqrt{g^\var(\bx^\var)}d\Gamma^\var, $$
and the covariant components in curvilinear coordinates for the displacements field:
$$
\hat{\bu}^\var(\hat{\bx}^\var)=\hat{u}^\var_i(\hat{\bx}^\var)\hat{\be}^i=:u_i^\var(\bx^\var)\bg^{i,\var}(\bx^\var),\ {\rm with}\ \hat{\bx}^\var=\bTheta(\bx^\var).
$$
Let us define the space,
\begin{align*} 
V(\Omega^\varepsilon)=\{\bv^\varepsilon=(v_i^\varepsilon)\in [H^1(\Omega^\varepsilon)]^3; \bv^\varepsilon=\mathbf{\bcero} \ {\rm on} \ \Gamma_0^\varepsilon\}.
\end{align*}
This is a real Hilbert space with the induced inner product of $[H^1(\Omega^\var)]^3$. The corresponding norm  is denoted by $||\cdot||_{1,\Omega^\var}$. By (\ref{normal3}), we deduce that the condition $\hat{v}_n\le0$ in the definition of $K(\hat{\Omega}^\var)$ in Problem \ref{problema_cart} is equivalent to $v_3^\var\ge0$. Therefore, let us define the following subset of admissible unknowns:
\begin{align*} 
K(\Omega^\varepsilon)=\{\bv^\varepsilon=(v_i^\varepsilon)\in V(\Omega^\varepsilon); v_3^\varepsilon\ge0 \ {\rm on} \ \Gamma_C^\varepsilon  \}.
\end{align*}
This is a non-empty, closed and convex subset of $V(\Omega^\varepsilon)$. With this definitions it is straightforward to derive from the Problem \ref{problema_cart} the following variational problem:
\begin{problem}\label{problema_eps}
Find $\bu^\varepsilon=(u_i^\varepsilon):{\Omega}^\varepsilon \rightarrow \mathbb{R}^3$  such that,
\begin{align*}
  \displaystyle
  & \bu^\varepsilon\in K(\Omega^\varepsilon),
  \quad\int_{\Omega^\varepsilon}A^{ijkl,\varepsilon}e^\varepsilon_{k||l}(\bu^\varepsilon)%
   (e^\varepsilon_{i||j}(\bv^\varepsilon)-e^\varepsilon_{i||j}(\bu^\varepsilon))\sqrt{g^\varepsilon} dx^\varepsilon\nonumber\\
 & \quad\ge \int_{\Omega^\varepsilon} f^{i,\varepsilon} (v_i^\varepsilon-u_i^\varepsilon) \sqrt{g^\varepsilon} dx^\varepsilon + \int_{\Gamma_+^\varepsilon} h^{i,\varepsilon} (v_i^\varepsilon-u_i^\varepsilon)\sqrt{g^\varepsilon}  d\Gamma^\varepsilon  \quad \forall \bv^\varepsilon\in K(\Omega^\varepsilon),
  \end{align*}
\end{problem}
where the functions $A^{ijkl,\varepsilon}=A^{jikl,\varepsilon}=A^{klij,\varepsilon}\in{\cal C}^1(\bar{\Omega}^\var)$, defined by
\begin{equation}\label{TensorAeps}
A^{ijkl,\varepsilon}:= \lambda g^{ij,\varepsilon}g^{kl,\varepsilon} + \mu(g^{ik,\varepsilon}g^{jl,\varepsilon} + g^{il,\varepsilon}g^{jk,\varepsilon} ), 
\end{equation}
represent the contravariant components of the three-dimensional elasticity tensor, and the functions $e^\varepsilon_{i||j}(\bv^\var)=e^\varepsilon_{j||i}(\bv^\var)\in L^2(\Omega^\var)$ are defined for all $\bv^\var\in [H^1(\Omega^\var)]^3$ by
\begin{align*}
e^\varepsilon_{i||j}(\bv^\var):= \frac1{2}(v^\varepsilon_{i||j}+ v^\varepsilon_{j||i})=\frac1{2}(\d^\varepsilon_jv^\varepsilon_i + \d^\varepsilon_iv^\varepsilon_j) - \Gamma^{p,\varepsilon}_{ij}v^\varepsilon_p.
\end{align*}
Note that the following additional relations are satisfied,
$$
\Gamma^{3,\varepsilon}_{\alpha 3}=\Gamma^{p,\varepsilon}_{33}=0  \ \textrm{in} \ \bar{\Omega}^\varepsilon,%
\quad  A^{\alpha\beta\sigma 3,\varepsilon}=A^{\alpha 333,\varepsilon}=0 \ \textrm{in} \ \bar{\Omega}^\varepsilon,
$$
as a consequence of the definition of $\bTheta$ in (\ref{bTheta}). The definition of the fourth order tensor (\ref{TensorAeps}) imply that (see Theorem 1.8-1, \cite{Ciarlet4b}) for $\var>0$ small enough, there exists a constant $C_e>0$, independent of $\var$, such that,
\begin{align} \label{elipticidadA}
 \sum_{i,j}|t_{ij}|^2\leq C_e A^{ijkl,\var}(\bx^\var)t_{kl}t_{ij},
\end{align}
for all $\bx^\var\in\bar{\Omega}^\var$ and all $\bt=(t_{ij})\in\mathbb{S}^2$.
{ The unique solvability of Problem \ref{problema_eps} for $\var>0$ small enough is straightforward. Indeed, combining the use of (\ref{elipticidadA}) and a Korn inequality (see for example \cite[Th. 1.7-4]{Ciarlet4b})}, and given that $K(\Omega^\var)$ is a closed non-empty convex set, we can cast this problem in the framework of the elliptic variational inequalities theory (see, for example {\cite{Capelo,Jarusek,HS}}), and conclude the existence and uniqueness of $\bu^\varepsilon\in K(\Omega^\varepsilon)$, solution of Problem \ref{problema_eps}.

\begin{remark}
We recall that the vector field $\bu^\varepsilon=(u_i^\varepsilon):{\Omega}^\varepsilon \rightarrow \mathbb{R}^3$ solution of Problem \ref{problema_eps} has to be interpreted conveniently. The functions  $u_i^\varepsilon:\bar{\Omega}^\varepsilon \rightarrow \mathbb{R}^3$ are the covariant components of the ``true" displacements field $\bUcal^\var:=u_i^\varepsilon \bg^{i,\varepsilon}:\bar{\Omega}^\varepsilon \rightarrow \mathbb{R}^3$.
\end{remark}

If the functions involved have sufficient regularity, from Problem \ref{problema_eps} we can deduce the following strong formulation:

\begin{problem}\label{problema_mecanico}
Find $\bu^\var=(u_i^\var):\Omega^\var\longrightarrow \mathbb{R}^3$ such that,
\begin{align}\label{equilibrio}
&-\sigma^{ij,\var}||_j(\bu^\var)=f^{i,\var} \en \Omega^\var, \\\label{Dirichlet}
&u_i^\var=0 \on \Gamae_0, \\\label{Neumann}
&\sigma^{ij,\var}(\bu^\var)n_j^\var=h^{i,\var} \on \Gamma^\var_+,\\
&u_3^\var\ge 0,\ \sigma^{33,\var}(\bu^\var)\le 0,\ \sigma^{33,\var}(\bu^\var)u_3^\var=0,\ {\sigma^{3\alpha,\var}(\bu^\var)=0}\  \on \Gamma^\var_C,\label{Signorini}
\end{align}
where the functions
\begin{align*}
\sigma^{ij,\var}(\bu^\var):=A^{ijkl,\var}\ekl^\var(\bu^\var),
\end{align*}
are the contravariant components of the linearized stress tensor field and the functions
\begin{align*}
\sigma^{ij,\var}||_k(\bu^\var):= \d_k^\var\sigma^{ij,\var}(\bu^\var) + \Gamma_{pk}^{i,\var} \sigma^{pj,\var}(\bu^\var) + \Gamma_{kq}^{j,\var}\sigma^{iq,\var}(\bu^\var),
\end{align*}
denote the first-order covariant derivatives of the stress tensor components.
\end{problem}

We now proceed to describe the equations in Problem \ref{problema_mecanico}. Expression (\ref{equilibrio}) is the equilibrium equation. The equality (\ref{Dirichlet}) is the Dirichlet condition of place, (\ref{Neumann}) is the Neumann condition and (\ref{Signorini}) is the Signorini condition of unilateral, frictionless, contact. 

\section{The scaled three-dimensional shell problem}\label{seccion_dominio_ind}

For convenience, we consider a reference domain independent of the small parameter $\var$. Hence, let us define the three-dimensional domain $\Omega:=\omega \times (-1, 1) $ and  its boundary $\Gamma=\partial\Omega$. We also define the following parts of the boundary,
 \begin{align*}
 \Gamma_+:=\omega\times \{1\}, \quad \Gamma_C:= \omega\times \{-1\},\quad \Gamma_0:=\gamma_0\times[-1,1].
 \end{align*}
 Let $\bx=(x_1,x_2,x_3)$ be a generic point in $\bar{\Omega}$ and we consider the notation $\d_i$ for the partial derivative with respect to $x_i$. We define the  projection map $\pi^\varepsilon: \bar{\Omega} \longrightarrow\bar{\Omega}^\varepsilon,$ such that
 \begin{align*}
 \pi^\varepsilon(\bx)=\bx^\varepsilon=(x_i^\varepsilon)=(x_1^\var,x_2^\var,x_3^\var)=(x_1,x_2,\varepsilon x_3)\in \bar{\Omega}^\varepsilon,
 \end{align*}
 hence, $\d_\alpha^\varepsilon=\d_\alpha $  and $\d_3^\varepsilon=\frac1{\varepsilon}\d_3$. We consider the scaled unknown $\bu(\varepsilon)=(u_i(\varepsilon)):\bar{\Omega}\longrightarrow \mathbb{R}^3$ and the scaled vector fields $\bv=(v_i):\bar{\Omega}\longrightarrow \mathbb{R}^3 $ defined as
 \begin{align*}
 u_i^\varepsilon(\bx^\varepsilon)=:u_i(\varepsilon)(\bx), \ \textrm{and} \ v_i^\varepsilon(\bx^\varepsilon)=:v_i(\bx) \ \forall \bx\in\bar{\Omega},\ \bx^\varepsilon=\pi^\varepsilon(\bx)\in \bar{\Omega}^\varepsilon.
 \end{align*}
We remind that, by hypothesis, the Lam\'e constants are independent of $\varepsilon$. Also, let the functions, $\Gamma_{ij}^{p,\varepsilon}, g^\varepsilon, A^{ijkl,\varepsilon}$ defined in (\ref{simbolos3D}), (\ref{g}), (\ref{TensorAeps}) be associated with the functions $\Gamma_{ij}^p(\varepsilon),$ $ g(\varepsilon),$ $ A^{ijkl}(\varepsilon),$ defined by
  \begin{align} \label{escalado_simbolos}
  &\Gamma_{ij}^p(\varepsilon)(\bx):=\Gamma_{ij}^{p,\varepsilon}(\bx^\varepsilon),\\\label{escalado_g}
  & g(\varepsilon)(\bx):=g^\varepsilon(\bx^\varepsilon),\\\label{tensorA_escalado}
  & A^{ijkl}(\varepsilon)(\bx):=A^{ijkl,\varepsilon}(\bx^\varepsilon),
  \end{align}
for all $\bx\in\bar{\Omega}$, $\bx^\varepsilon=\pi^\varepsilon(\bx)\in\bar{\Omega}^\varepsilon$. For all $\bv=(v_i)\in [H^1(\Omega)]^3$, let there be associated the scaled linearized strains $(\eij(\var)(\bv))\in L^2(\Omega)$, which we also denote as $(\eij(\var;\bv))$, defined by
\begin{align*} 
&\eab(\varepsilon;\bv):=\frac{1}{2}(\d_\beta v_\alpha + \d_\alpha v_\beta) - \Gamma_{\alpha\beta}^p(\varepsilon)v_p,\\ 
  & \eatres(\varepsilon;\bv):=\frac{1}{2}(\frac{1}{\var}\d_3 v_\alpha + \d_\alpha v_3) - \Gamma_{\alpha 3}^p(\varepsilon)v_p,\\ 
  & \edtres(\varepsilon;\bv):=\frac1{\varepsilon}\d_3v_3.
\end{align*}
Note that with these definitions it is verified that
\begin{align*}
\eij^\var(\bv^\var)(\pi^\var(\bx))=\eij(\var;\bv)(\bx) \ \forall\bx\in\Omega.
\end{align*}

 \begin{remark} The functions $\Gamma_{ij}^p(\varepsilon), g(\varepsilon), A^{ijkl}(\varepsilon)$ converge in $\mathcal{C}^0(\bar{\Omega})$ when $\varepsilon$ tends to zero.
 \end{remark}

 \begin{remark}When we consider
 $\varepsilon=0$ the functions will be defined with respect to $\by\in\bar{\omega}$. We shall distinguish the three-dimensional Christoffel symbols from the two-dimensional ones associated to $S$ by using  $\Gamma_{\alpha \beta}^\sigma(\varepsilon)$ and $ \Gamma_{\alpha\beta}^\sigma$, respectively.
 \end{remark}
We will study the asymptotic behavior of the scaled contravariant components $A^{ijkl}(\var)$ of the three-dimensional elasticity tensor defined in (\ref{tensorA_escalado}), as $\var\rightarrow0$.  We show the uniform positive definiteness  not only with respect to $\bx\in\bar{\Omega}$, but also with respect to $\var$, $0<\var\leq\var_0$. Furthermore, the limits are functions of $\by\in\bar{\omega}$ only, that is, independent of the transversal variable $x_3$. Let us recall Theorem 3.3-2, \cite{Ciarlet4b}.

\begin{theorem} \label{Th_comportamiento asintotico}
Let $\omega$  be a domain in $\mathbb{R}^2$, $\btheta\in\mathcal{C}^2(\bar{\omega};\mathbb{R}^3)$ be an injective mapping such that the two vectors $\ba_\alpha=\d_\alpha\btheta$ are linearly independent at all points of $\bar{\omega}$ and $a^{\alpha\beta}$ denote the contravariant components of the metric tensor of $S=\btheta(\bar{\omega})$. In addition to that, let the other assumptions on the mapping $\btheta$ and the definition of $\var_0$ be as in Theorem \ref{var_0}. The contravariant components $A^{ijkl}(\var)$ of the scaled three-dimensional elasticity tensor defined in (\ref{tensorA_escalado}) satisfy
\begin{align*}
A^{ijkl}(\var)= A^{ijkl}(0) + O(\var) \ \textrm{and} \ A^{\alpha\beta\sigma 3}(\var)=A^{\alpha 3 3 3}(\var)=0,
\end{align*}
for all $\var$, $0<\var \leq \var_0$, and
\begin{align*}
A^{\alpha\beta\sigma\tau}(0)&= \lambda a^{\alpha\beta}a^{\sigma\tau} + \mu(a^{\alpha\sigma}a^{\beta\tau} + a^{\alpha\tau}a^{\beta\sigma}), & A^{\alpha\beta 3 3}(0)&= \lambda a^{\alpha\beta},
\\
 A^{\alpha 3\sigma 3}(0)&=\mu a^{\alpha\sigma} ,& A^{33 3 3}(0)&= \lambda + 2\mu,
 \\
A^{\alpha\beta\sigma 3}(0) &=A^{\alpha 333}(0)=0.
\end{align*}
Moreover, there exists a constant $C_e>0$, independent of the variables and $\var$, such that
  \begin{align} \label{elipticidadA_eps}
  \sum_{i,j}|t_{ij}|^2\leq C_e A^{ijkl}(\varepsilon)(\bx)t_{kl}t_{ij},
  \end{align}
 for all $\var$, $0<\var\leq\var_0$, for all $\bx\in\bar{\Omega}$ and all $\bt=(t_{ij})\in\mathbb{S}^2$.
\end{theorem}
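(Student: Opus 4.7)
The plan is to derive everything from an explicit computation of the covariant basis $\bg_i^\var$ associated with $\bTheta(\bx^\var) = \btheta(\by) + x_3^\var \ba_3(\by)$. First I would compute $\bg_\alpha^\var = \d_\alpha \btheta + x_3^\var \d_\alpha \ba_3 = \ba_\alpha - x_3^\var b_\alpha^\sigma \ba_\sigma$ (using the Weingarten formula $\d_\alpha \ba_3 = -b_\alpha^\sigma \ba_\sigma$), and $\bg_3^\var = \ba_3$. After scaling through $\pi^\var$, one gets $\bg_\alpha(\var)(\bx) = \ba_\alpha(\by) - \var x_3 b_\alpha^\sigma(\by)\ba_\sigma(\by)$ and $\bg_3(\var) = \ba_3$, which is a polynomial in $\var$; this immediately yields $g_{ij}(\var) = g_{ij}(0) + O(\var)$ in $\mathcal{C}^0(\bar\Omega)$, with $g_{\alpha\beta}(0) = a_{\alpha\beta}$, $g_{33}(0) = 1$, and the key orthogonality $g_{\alpha 3}(\var) \equiv 0$ for every $\var$ (not merely at $\var = 0$), because $\bg_\alpha(\var) \cdot \ba_3 = 0$.

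The second step would be to invert the metric. Since $(g_{ij}(\var))$ is block-diagonal with blocks $(g_{\alpha\beta}(\var))$ and $1$, the same is true for its inverse, so $g^{\alpha 3}(\var) \equiv 0$ and $g^{33}(\var) \equiv 1$, while $g^{\alpha\beta}(\var) = a^{\alpha\beta} + O(\var)$ follows from inverting $g_{\alpha\beta}(\var) = a_{\alpha\beta} + O(\var)$ (the matrix inversion is smooth near the positive-definite limit, uniformly on the compact $\bar\omega$, for $\var$ small). Substituting these into
\begin{equation*}
A^{ijkl}(\var) = \lambda g^{ij}(\var) g^{kl}(\var) + \mu\bigl(g^{ik}(\var) g^{jl}(\var) + g^{il}(\var) g^{jk}(\var)\bigr)
\end{equation*}
yields both the identities $A^{\alpha\beta\sigma 3}(\var) = A^{\alpha 333}(\var) = 0$ (each term carries a factor $g^{\alpha 3}$ or $g^{3\alpha}$) and the claimed explicit expressions for $A^{ijkl}(0)$, together with the $O(\var)$ remainder.

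For the uniform ellipticity bound (\ref{elipticidadA_eps}), I would first establish it at $\var = 0$. Given a symmetric $\bt$, using $\lambda \ge 0$ and symmetry of $\bt$,
\begin{equation*}
A^{ijkl}(0) t_{ij} t_{kl} = \lambda (g^{ij}(0) t_{ij})^2 + 2\mu\, g^{ik}(0) g^{jl}(0) t_{ij} t_{kl} \ge 2\mu\, g^{ik}(0) g^{jl}(0) t_{ij} t_{kl}.
\end{equation*}
Because $(g^{ij}(0))(\by)$ is symmetric positive definite uniformly on $\bar\omega$ (its smallest eigenvalue is continuous and strictly positive on the compact $\bar\omega$), there is a constant $c_0 > 0$ such that $g^{ik}(0) g^{jl}(0) t_{ij} t_{kl} \ge c_0 \sum_{i,j} |t_{ij}|^2$ on $\bar\Omega$. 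The final step is a perturbation argument: since $A^{ijkl}(\var) \to A^{ijkl}(0)$ in $\mathcal{C}^0(\bar\Omega)$, shrinking $\var_0$ if necessary, one gets $|A^{ijkl}(\var) - A^{ijkl}(0)| \le \mu c_0$ uniformly, which preserves coercivity with constant $C_e := (\mu c_0)^{-1}$ independent of $\var \in (0, \var_0]$.

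The step I expect to require the most care is the uniform (in $\var$) ellipticity: the naive bound obtained from positive definiteness of $(g^{ij}(\var))$ alone would give a constant potentially degenerating with $\var$, so it is important that the $O(\var)$ remainder be controlled uniformly on the compact $\bar\Omega$ and absorbed into the strictly positive lower bound inherited from the limit tensor. Everything else is a direct algebraic manipulation of the explicit scaled basis vectors.
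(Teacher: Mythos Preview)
Your proposal is correct and follows precisely the standard argument for this result. Note that the paper does not give its own proof of this theorem but merely recalls it from \cite{Ciarlet4b}, Theorem~3.3-2; the computation you outline (explicit scaled covariant basis, block-diagonality of $(g_{ij}(\var))$ giving $g^{\alpha 3}(\var)\equiv 0$ and $g^{33}(\var)\equiv 1$, substitution into the tensor formula, and a perturbation argument for uniform ellipticity) is exactly the route taken in that reference.
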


\begin{remark}
The asymptotic behavior of $g(\var)$ and the contravariant components of the elasticity tensor, $A^{ijkl}(\var)$, also implies that
\begin{align} \label{tensorA_tildes}
A^{ijkl}(\var)\sqrt{g(\var)}= A^{ijkl}(0)\sqrt{a} + \var \tilde{A}^{ijkl,1} + \var^2 \tilde{A}^{ijkl,2} + o(\var^2),
\end{align}
for certain regular contravariant components $\tilde{A}^{ijkl,\alpha}$ of certain tensors.
\end{remark}
Let the scaled applied forces $\bbf(\varepsilon):\Omega\longrightarrow \mathbb{R}^3$ and  $\bbh(\varepsilon):\Gamma_+\longrightarrow \mathbb{R}^3$ be defined by
   \begin{align*}
  \bbf^\var&=(f^{i,\varepsilon})(\bx^\varepsilon)=:\bbf(\var)= (f^i(\varepsilon))(\bx)
  \quad \forall \bx\in\Omega, \ \textrm{where} \ \bx^\varepsilon=\pi^\varepsilon(\bx)\in \Omega^\varepsilon, \\
   \bbh^\var&=(h^{i,\varepsilon})(\bx^\varepsilon)=:\bbh(\var)= (h^i(\varepsilon))(\bx)
   \quad \forall \bx\in\Gamma_+, \ \textrm{where} \ \bx^\varepsilon=\pi^\varepsilon(\bx)\in \Gamma_+^\varepsilon .
\end{align*}
Also, we define the space
\begin{align*} 
V(\Omega)=\{\bv=(v_i)\in [H^1(\Omega)]^3; \bv=\mathbf{0} \ on \ \Gamma_0\},
\end{align*}
which is a Hilbert space, with associated norm denoted by $||\cdot||_{1,\Omega}$. We also define the non-empty closed convex subset
$$
K(\Omega)=\{\bv=(v_i)\in V(\Omega); v_3\ge0\ on \ \Gamma_C\}.
$$
The scaled variational problem  can then  be written as follows:
\begin{problem}\label{problema_escalado}
Find $\bu(\varepsilon):\Omega\longrightarrow \mathbb{R}^3$ such that,
\begin{align}\nonumber
&\bu(\varepsilon)\in K(\Omega),\nonumber \\
&\int_{\Omega}A^{ijkl}(\varepsilon)e_{k||l}(\varepsilon;\bu(\varepsilon))%
(e_{i||j}(\varepsilon;\bv)-e_{i||j}(\varepsilon;\bu(\varepsilon)))\sqrt{g(\varepsilon)} dx \nonumber \\
& \quad\ge \int_{\Omega} f^{i}(\varepsilon) (v_i-u_i(\varepsilon)) \sqrt{g(\varepsilon)} dx
+\frac{1}{\varepsilon}\int_{\Gamma_+} h^{i}(\varepsilon) (v_i-u_i(\varepsilon))\sqrt{g(\varepsilon)}  d\Gamma  \quad \forall \bv\in K(\Omega), \label{ec_problema_escalado}
\end{align}
\end{problem}

\begin{remark}
Note that the order of the applied forces has not been determined yet.
\end{remark}
{ The unique solvability of Problem \ref{problema_escalado} for $\var>0$ small enough is straightforward. Indeed, combining the use of (\ref{elipticidadA_eps}) and a Korn inequality (see for example \cite[Th. 1.7-4]{Ciarlet4b})}, and given that $K(\Omega)$ is a closed non-empty convex set, we can cast this problem in the framework of the elliptic variational inequalities theory (see, for example {\cite{Capelo,Jarusek,HS}}), and conclude the existence and uniqueness of $\bu(\varepsilon)\in K(\Omega)$, solution of Problem \ref{problema_escalado}.


We now present some additional results which will be used in the next section. First, we recall the Theorem 3.3-1, \cite{Ciarlet4b}.

\begin{theorem} \label{Th_simbolos2D_3D}
Let $\omega$ be a domain in $\mathbb{R}^2$, let $\btheta\in\mathcal{C}^3(\bar{\omega};\mathbb{R}^3)$ be an injective mapping such that the two vectors $\ba_\alpha=\d_\alpha\btheta$ are linearly independent at all points of $\bar{\omega}$ and let $\var_0>0$ be as in Theorem \ref{var_0}. The functions $\Gamma^p_{ij}(\var)=\Gamma^p_{ji}(\var)$ and $g(\var)$ are defined in (\ref{escalado_simbolos})--(\ref{escalado_g}), the functions $b_{\alpha\beta}, b_\alpha^\sigma, \Gamma_{\alpha\beta}^\sigma$ and $a$ are defined in (\ref{2.1c})--(\ref{definicion_a}) and the covariant derivatives $b_\beta^\sigma|_\alpha$ are defined by
\begin{align} \label{b_barra}
b_\beta^\sigma|_\alpha:=\d_\alpha b_\beta^\sigma +\Gamma^\sigma_{\alpha\tau}b_\beta^\tau - \Gamma^\tau_{\alpha\beta}b^\sigma_\tau.
\end{align}
The functions $b_{\alpha\beta}, b_\alpha^\sigma, \Gamma_{\alpha\beta}^\sigma, b_\beta^\sigma|_\alpha$ and $a$ are identified with functions in $\mathcal{C}^0(\bar{\Omega})$. Then
\begin{align*}
\begin{aligned}[c]
 \Gamma_{\alpha\beta}^\sigma(\var)&=  \Gamma_{\alpha\beta}^\sigma -\var x_3b_\beta^\sigma|_\alpha + O(\var^2), \\
  \d_3 \Gamma_{\alpha\beta}^p(\var)&= O(\var),
   \\
   \Gamma_{\alpha3}^3(\var)&=\Gamma_{33}^p(\var)=0,
\end{aligned}
\qquad
\begin{aligned}[c]
 \Gamma_{\alpha\beta}^3(\var)&=b_{\alpha\beta} - \var x_3 b_\alpha^\sigma b_{\sigma\beta},
 \\
 \Gamma_{\alpha3}^\sigma(\var)& = -b_\alpha^\sigma - \var x_3 b_\alpha^\tau b_\tau^\sigma + O(\var^2),
\\
 g(\varepsilon)&=a + O(\varepsilon),
\end{aligned}
\end{align*}
for all $\var$, $0<\var\leq\var_0$, where the order symbols $O(\var)$ and $O(\var^2)$  are meant with respect to the norm $||\cdot||_{0,\infty,\bar{\Omega}}$ defined by
\begin{align*} 
||w||_{0,\infty,\bar{\Omega}}=\sup \{|w(\bx)|; \bx\in\bar{\Omega}\}.
\end{align*}
 Finally, there exist constants $a_0, g_0$ and $g_1$ such that
 \begin{align*}
 & 0<a_0\leq a(\by) \ \forall \by\in \bar{\omega},
 \\ 
 & 0<g_0\leq g(\varepsilon)(\bx) \leq g_1 \ \forall \bx\in\bar{\Omega} \ \textrm{and} \ \forall \ \var, 0<\varepsilon\leq \varepsilon_0.
 \end{align*}
\end{theorem}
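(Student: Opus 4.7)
The plan is to start from the explicit map $\bTheta(\bx^\varepsilon)=\btheta(\by)+x_3^\varepsilon\ba_3(\by)$. Differentiating it and invoking the Weingarten formula $\d_\alpha\ba_3=-b_\alpha^\sigma\ba_\sigma$ gives
\[
\bg_\alpha^\varepsilon=(\delta_\alpha^\sigma-x_3^\varepsilon b_\alpha^\sigma)\ba_\sigma,\qquad \bg_3^\varepsilon=\ba_3,
\]
which, after the $\pi^\varepsilon$-rescaling $x_3^\varepsilon=\varepsilon x_3$, becomes $\bg_\alpha(\varepsilon)=(\delta_\alpha^\sigma-\varepsilon x_3\, b_\alpha^\sigma)\ba_\sigma$ and $\bg_3(\varepsilon)=\ba_3$, the right-hand sides being $\mathcal{C}^0(\bar\Omega)$ functions pulled back from $\bar\omega$. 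Every assertion of the theorem will follow from this representation by polynomial bookkeeping in $\varepsilon$.

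For the contravariant basis I would set $M(\varepsilon)_\alpha^\sigma:=\delta_\alpha^\sigma-\varepsilon x_3\, b_\alpha^\sigma$ and note that $\bg^3(\varepsilon)=\ba_3$ and $\bg^\sigma(\varepsilon)=(M(\varepsilon)^{-1})_\rho^\sigma\ba^\rho$. Inverting by a Neumann series, which converges uniformly in $\bar\Omega$ once $\var_0$ is chosen so that $\var_0\sup_{\bar\omega}|b_\alpha^\sigma|<1$, yields
\[
\bg^\sigma(\varepsilon)=\ba^\sigma+\varepsilon x_3\, b_\rho^\sigma\ba^\rho+O(\varepsilon^2),
\]
with remainders controlled in $\|\cdot\|_{0,\infty,\bar\Omega}$. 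The two trivial identities drop out at once: $\d_3^\varepsilon\bg_3=0$ gives $\Gamma^p_{33}(\varepsilon)=0$, and $\d_\alpha\bg_3=\d_\alpha\ba_3=-b_\alpha^\sigma\ba_\sigma\perp\ba_3$ gives $\Gamma^3_{\alpha 3}(\varepsilon)=0$.

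The conceptual heart of the proof is the identification of the $\varepsilon$-coefficient in $\Gamma^\sigma_{\alpha\beta}(\varepsilon)$. I would compute $\d_\alpha\bg_\beta(\varepsilon)$ via the Gauss formula $\d_\alpha\ba_\beta=\Gamma^\tau_{\alpha\beta}\ba_\tau+b_{\alpha\beta}\ba_3$, dot with $\bg^\sigma(\varepsilon)$ and collect powers, obtaining
\[
\Gamma^\sigma_{\alpha\beta}(\varepsilon)=\Gamma^\sigma_{\alpha\beta}-\varepsilon x_3\bigl(\d_\alpha b_\beta^\sigma+\Gamma^\sigma_{\alpha\tau}b_\beta^\tau-\Gamma^\tau_{\alpha\beta}b_\tau^\sigma\bigr)+O(\varepsilon^2);
\]
recognising the bracketed quantity as $b_\beta^\sigma|_\alpha$ from \eqref{b_barra} yields the first claim. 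Dotting $\d_\alpha\bg_\beta(\varepsilon)$ with $\bg^3(\varepsilon)=\ba_3$ instead gives the exact identity $\Gamma^3_{\alpha\beta}(\varepsilon)=b_{\alpha\beta}-\varepsilon x_3\, b_\alpha^\sigma b_{\sigma\beta}$ (using $b_\beta^\sigma b_{\alpha\sigma}=b_\alpha^\sigma b_{\sigma\beta}$). The same scheme applied to $\d_\alpha\bg_3=-b_\alpha^\tau\ba_\tau$ gives $\Gamma^\sigma_{\alpha 3}(\varepsilon)=-b_\alpha^\sigma-\varepsilon x_3\, b_\alpha^\tau b_\tau^\sigma+O(\varepsilon^2)$. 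In each expansion the $x_3$-dependence enters only at order $\varepsilon$, so $\d_3\Gamma^p_{\alpha\beta}(\varepsilon)=O(\varepsilon)$ follows by inspection.

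For $g(\varepsilon)=\det(g_{ij}(\varepsilon))$ I would use the block structure $g_{\alpha 3}(\varepsilon)=\bg_\alpha(\varepsilon)\cdot\ba_3=0$ and $g_{33}(\varepsilon)=1$ to reduce to $\det(g_{\alpha\beta}(\varepsilon))=(\det M(\varepsilon))^2\, a=\bigl(1-\varepsilon x_3(b_1^1+b_2^2)+O(\varepsilon^2)\bigr)^2 a=a+O(\varepsilon)$. The bound $a\ge a_0>0$ on $\bar\omega$ is continuity and positivity of $a$ on a compact set, and the uniform convergence $g(\varepsilon)\to a$ in $\mathcal{C}^0(\bar\Omega)$ then delivers $g_0,g_1$ after possibly shrinking $\var_0$. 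I do not anticipate a real obstacle here; the only care needed is to keep each remainder controlled in $\|\cdot\|_{0,\infty,\bar\Omega}$, which is guaranteed by $\btheta\in\mathcal{C}^3(\bar\omega;\mathbb{R}^3)$ (ensuring in particular that $b_\beta^\sigma|_\alpha\in\mathcal{C}^0(\bar\omega)$).
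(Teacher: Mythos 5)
Your proof is correct and is essentially the standard argument: the paper does not prove this theorem itself but recalls it as Theorem 3.3-1 of \cite{Ciarlet4b}, whose proof likewise starts from $\bg_\alpha^\varepsilon=(\delta_\alpha^\sigma-x_3^\varepsilon b_\alpha^\sigma)\ba_\sigma$, $\bg_3^\varepsilon=\ba_3$ and the Gauss--Weingarten relations, and then expands $\Gamma^{p,\varepsilon}_{ij}=\bg^{p,\varepsilon}\cdot\partial_i^\varepsilon\bg_j^\varepsilon$ and $g(\varepsilon)$ in powers of $\varepsilon$ exactly as you do. One small repair: you need not shrink $\varepsilon_0$ either for inverting your $M(\varepsilon)$ or for the lower bound $g(\varepsilon)\ge g_0>0$, since Theorem \ref{var_0} already gives $\det(\bg_1,\bg_2,\bg_3)>0$ on the compact set $\bar{\Omega}_0$, so $\det M(\varepsilon)$ and $g(\varepsilon)=a\,(\det M(\varepsilon))^2$ are uniformly bounded away from zero (and above) for all $0<\varepsilon\le\varepsilon_0$, which also keeps the $O(\varepsilon)$ and $O(\varepsilon^2)$ remainders uniform on the whole range stated in the theorem.
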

In \cite[Theorem 3.4-1]{Ciarlet4b}, we find the following useful result
\begin{theorem}\label{th_int_nula}
 Let $\omega$ be a domain in $\mathbb{R}^2$ with boundary $\gamma$, let $\Omega=\omega\times (-1,1)$, and let $g\in L^p(\Omega)$, $p>1$, be a function such that
 \begin{align*}
 \intO g \d_3v dx=0, \ \textrm{for all} \ v\in \mathcal{C}^{\infty}(\bar{\Omega}) \ \textrm{with} \ v=0 \on \gamma\times[-1,1].
 \end{align*}
 Then $g=0$ a.e in $\Omega$.
\end{theorem}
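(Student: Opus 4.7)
The statement is essentially the assertion that a distribution whose derivative in the $x_3$-direction vanishes must be independent of $x_3$, sharpened by the boundary behaviour that comes from the fact that the test functions $v$ need \emph{not} vanish on $\Gamma_+ \cup \Gamma_C$. The strategy has two stages: first exploit the smaller class $\mathcal{C}^\infty_c(\Omega)$ of test functions to show that $g$ depends only on $y \in \omega$; then exploit the extra freedom at $x_3 = \pm 1$ to kill the resulting function on $\omega$.

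\smallskip

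\emph{Step 1: $g$ is essentially independent of $x_3$.} Any $v \in \mathcal{C}^\infty_c(\Omega)$ is in particular in $\mathcal{C}^\infty(\bar{\Omega})$ and vanishes on $\gamma \times [-1,1]$, so the hypothesis applies. It follows that
\[
\langle \partial_3 g, v\rangle_{\mathcal{D}'(\Omega),\mathcal{D}(\Omega)} = -\int_\Omega g\, \partial_3 v\, dx = 0 \qquad \forall v \in \mathcal{C}^\infty_c(\Omega),
\]
so $\partial_3 g = 0$ in $\mathcal{D}'(\Omega)$. Since $\Omega = \omega \times (-1,1)$ is a product and $g \in L^p(\Omega) \subset L^1_{\mathrm{loc}}(\Omega)$, the classical characterization of distributions with vanishing derivative along a coordinate direction on a product domain (a routine consequence of Fubini's theorem and mollification in the $x_3$-variable) yields a function $G \in L^p(\omega)$ such that $g(y,x_3) = G(y)$ for almost every $(y,x_3) \in \Omega$.

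\smallskip

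\emph{Step 2: $G \equiv 0$.} Given any $\varphi \in \mathcal{C}^\infty_c(\omega)$, set
\[
v(y,x_3) := \varphi(y)\,\psi(x_3), \qquad \psi(x_3) := \tfrac{1}{2}(x_3+1),
\]
so that $v \in \mathcal{C}^\infty(\bar{\Omega})$, $v = 0$ on $\gamma \times [-1,1]$ (because $\varphi$ is compactly supported in $\omega$), and $\partial_3 v(y,x_3) = \tfrac{1}{2}\varphi(y)$. Substituting into the hypothesis and using $g(y,x_3) = G(y)$,
\[
0 = \int_\Omega g\, \partial_3 v\, dx = \tfrac{1}{2}\int_{-1}^1 dx_3 \int_\omega G(y)\varphi(y)\, dy = \int_\omega G(y)\varphi(y)\, dy.
\]
Since this holds for every $\varphi \in \mathcal{C}^\infty_c(\omega)$, the fundamental lemma of the calculus of variations gives $G = 0$ a.e.\ in $\omega$, hence $g = 0$ a.e.\ in $\Omega$.

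\smallskip

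\emph{Main difficulty.} Everything reduces to Step 1: passing from the distributional identity $\partial_3 g = 0$ to a genuine pointwise-a.e.\ statement that $g$ does not depend on $x_3$, for a function with only $L^p$ regularity. Once the product structure of $\Omega$ has been used to produce the function $G$, Step 2 is immediate because the hypothesis admits test functions that do not vanish on $\Gamma_+ \cup \Gamma_C$, which is precisely what allows one to choose $\psi$ with $\psi(1)-\psi(-1) \neq 0$.
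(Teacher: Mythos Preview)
The paper does not actually prove this statement; it is quoted verbatim as Theorem 3.4-1 of \cite{Ciarlet4b} and invoked without argument. So there is no ``paper's own proof'' to compare against.

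Your argument is correct. Both steps are sound: restricting to $\mathcal{C}^\infty_c(\Omega)$ gives $\partial_3 g = 0$ in $\mathcal{D}'(\Omega)$, and the product structure of $\Omega$ together with a Fubini/mollification argument then yields $g(y,x_3)=G(y)$ with $G\in L^p(\omega)$; the separated test function $v(y,x_3)=\varphi(y)\tfrac{1}{2}(x_3+1)$ is admissible (it vanishes on $\gamma\times[-1,1]$ but not on $\Gamma_\pm$), and kills $G$. The only point worth tightening is the justification in Step~1 that $\partial_3 g=0$ in $\mathcal{D}'(\omega\times(-1,1))$ forces $g$ to be a.e.\ independent of $x_3$: this follows by testing against tensor products $\phi(y)\psi(x_3)$ with $\phi\in\mathcal{C}^\infty_c(\omega)$, $\psi\in\mathcal{C}^\infty_c(-1,1)$, using a countable dense family of $\psi$'s to obtain a single full-measure set of $y$ on which $x_3\mapsto g(y,x_3)$ has vanishing distributional derivative on the connected interval $(-1,1)$, hence is constant. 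You allude to this (``routine consequence of Fubini's theorem and mollification''), and it is indeed routine, but spelling it out in one line would remove any doubt.
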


In this work we also need the following extension for inequalities:

\begin{theorem} \label{th_int_post}
Let $\omega$ be a domain in $\mathbb{R}^2$ with boundary $\gamma$, let $\Omega=\omega\times (-1,1)$, and let $g\in L^p(\Omega)$, $p>1$, be a function such that
\begin{equation}\label{xx}
\intO g \d_3v dx\ge0, \ \textrm{for all} \ v\in \mathcal{C}^{\infty}(\bar{\Omega}) \ \textrm{with} \ v=0\ \textrm{on}\ \gamma\times[-1,1]\ \textrm{and}\ v\ge0\ \textrm{in}\ \Omega.
\end{equation}
Then { $g=0$} a.e. in $\Omega$.
\end{theorem}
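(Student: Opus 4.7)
The plan is to reduce the statement to Theorem \ref{th_int_nula} by showing that the inequality (\ref{xx}) actually holds as an equality on a sufficiently rich class of test functions. The key observation is that shifting a smooth test function by an additive constant does not change its $\partial_3$-derivative, so once we allow ourselves to add a large constant we obtain, for the purposes of (\ref{xx}), the same information as if the sign restriction on $v$ were absent.

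Concretely, I would first test with product functions of the form $v(\bx)=\phi(y)(\psi(x_3)+c_\psi)$, where $\phi\in C^\infty_c(\omega)$ satisfies $\phi\ge 0$, $\psi\in C^\infty([-1,1])$, and $c_\psi$ is chosen large enough that $\psi+c_\psi\ge 0$ on $[-1,1]$. Such $v$ is admissible in (\ref{xx}): it is smooth up to the boundary, vanishes on $\gamma\times[-1,1]$ by the compact support of $\phi$, and is non-negative. Substituting and applying Fubini gives
\[
\int_\omega \phi(y)\,G_\psi(y)\,dy\ge 0,\qquad G_\psi(y):=\int_{-1}^{1} g(y,x_3)\,\psi'(x_3)\,dx_3,
\]
with $G_\psi\in L^p(\omega)$ by H\"older. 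The arbitrariness of $\phi\ge 0$ forces $G_\psi\ge 0$ a.e.\ in $\omega$, and applying the same argument with $\psi$ replaced by $-\psi$ (valid by the same shift trick, since $(-\psi+c_{-\psi})'=-\psi'$) yields $G_\psi\le 0$ a.e., hence $G_\psi=0$ a.e.\ in $\omega$, for each fixed $\psi\in C^\infty([-1,1])$.

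To pass from a $\psi$-dependent null set to a single null set I would fix a countable family $\{\psi_n\}$ dense in $C^1([-1,1])$, form the union $N=\bigcup_n N_n$ of the associated null sets, and observe that for every $y\notin N$ the map $\psi\mapsto G_\psi(y)$ is continuous on $C^1([-1,1])$, because $g(y,\cdot)\in L^p(-1,1)$ for a.e.\ $y$ by Fubini. Hence $G_\psi(y)=0$ for all smooth $\psi$ whenever $y\notin N$. Since every smooth function on $[-1,1]$ is the derivative of a smooth function, $g(y,\cdot)$ annihilates a dense subset of $L^{p'}(-1,1)$, so $g(y,\cdot)=0$ a.e.\ in $(-1,1)$, and Fubini delivers $g=0$ a.e.\ in $\Omega$. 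The only mildly delicate point is the passage to a uniform null set in the second step; the rest is a direct consequence of the constant-shift trick together with the arbitrariness of the tangential factor $\phi$.
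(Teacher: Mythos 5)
Your proof is correct, but it follows a genuinely different route from the paper's. The paper argues in two stages: first it takes $v(y,x_3)=\int_{-1}^{x_3}\varphi(y,t)\,dt$ with $\varphi\in D(\Omega)$, $\varphi\ge0$ (so $v\ge0$ and $\partial_3 v=\varphi$), which yields $\int_\Omega g\varphi\,dx\ge0$ and hence $g\ge0$ a.e.; then it tests with $v=e^{-x_3}\varphi(y)$, $\varphi\in D(\omega)$, $\varphi\ge0$, a nonnegative admissible function whose $x_3$-derivative is $\le0$, and the resulting reverse inequality combined with $g\ge0$ forces $g=0$. You instead never establish a sign for $g$: you exploit the fact that constants lie in the kernel of $\partial_3$, so that in separated test functions $\phi(y)(\psi(x_3)+c_\psi)$ the positivity constraint only bites on the tangential factor $\phi$, and the $x_3$-profile $\psi'$ can be taken with either sign; this turns the inequality into the family of equalities $G_\psi=0$ a.e.\ in $\omega$, which you then upgrade, via a countable dense family of smooth $\psi$ and slice-wise continuity of $\psi\mapsto G_\psi(y)$, into $g(y,\cdot)\perp C^0([-1,1])$ for a.e.\ $y$, hence $g=0$. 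The details you flag (admissibility of the shifted products, Fubini, the uniform null set obtained from a countable dense set of smooth profiles plus the a.e.\ integrability of the slices) all check out, provided the dense family is chosen among smooth functions (e.g.\ polynomials) so that each $\psi_n$ is admissible. The trade-off: the paper's argument is shorter and avoids the measure-theoretic bookkeeping, relying on one clever monotone-in-$x_3$ test function; yours isolates the structural reason the sign constraint is powerless in the transverse direction and reduces the theorem to a clean slice-wise duality statement, at the cost of the density/null-set argument — which, as you note, is the only delicate point, and you handle it adequately.
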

\begin{proof}
Given $\varphi\in D(\Omega)$, with $\varphi\ge0$, we define
$$
v(x_1,x_2,x_3)=\int_{-1}^{x_3}\varphi(x_1,x_2,t)dt.
$$
Then, $v=0$ on $\gamma\times[-1,1]$ and $v\ge0$ in $\Omega$. Moreover,
$$
0\le\int_\Omega g\partial_3 v dx=\int_\Omega g\varphi dx,
$$
and this for all $\varphi\in D(\Omega)$, with $\varphi\ge0$. Therefore, $g\ge0$ a.e. in $\Omega$. Otherwise, let $E\subset\Omega$, with $\meas(E)>0$ where $g<0$. We can define $\varphi$ as a convenient regularization of a cut of the characteristic function of $E$, and therefore,
$$
\int_\Omega g\varphi dx=\int_{E}g\varphi dx<0,
$$
which is a contradiction with the previous statement.
{ Now, let $\varphi\in D(\omega)$, $\varphi\ge0$ and $v=e^{-x_3}\varphi$. Then, $v\ge0$ and $\partial_3v=-e^{-x_3}\varphi\le0$. Therefore, going back to (\ref{xx}) for this particular choice of $v$, we find that
$$
-\intO  g e^{-x_3}\varphi\sqrt{a}dx\ge0,
$$
which, having previously shown that $g\ge0$,  is only possible if $g=0$ a.e. in $\Omega$.}
\end{proof}
\begin{remark}
This result holds if $\intO g \d_3v dx\ge0$ for all $v\in H^1(\Omega)$ such that $v=0$ on $\Gamma_0$ and $v\ge0$ a.e. in $\Omega$. It is in this way that we will use this result in the following.
\end{remark}

\section{Formal Asymptotic Analysis} \label{procedure} 

In this section, we highlight some relevant steps in the construction of the formal asymptotic expansion of the scaled unknown variable $\bu(\var)$ including the characterization of the zeroth-order term, and the derivation of some key results which will lead to the two-dimensional variational inequalities of the elastic shell contact problems. We define the scaled applied forces as,
$$ 
\bbf(\varepsilon)(\bx)=\varepsilon^p\bbf^p(\bx) \ \forall \bx\in \Omega, \qquad 
\bbh(\varepsilon)(\bx)=\varepsilon^{p+1}\bbh^{p+1}(\bx) \ \forall  \bx\in \Gamma_+ ,
$$
{ where $\bbf^p$ and $\bbh^{p+1}$ are independent of $\var$, and $p$ is a natural number that will indicate the order of applied forces}. We substitute in  (\ref{ec_problema_escalado}) to obtain the following problem:
\begin{problem}\label{problema_orden_fuerzas}
Find $\bu(\varepsilon):\Omega\longrightarrow \mathbb{R}^3$ such that,
\begin{align}
&\bu(\varepsilon)\in K(\Omega),\quad
\int_{\Omega}A^{ijkl}(\varepsilon)e_{k||l}(\varepsilon;\bu(\varepsilon))%
(e_{i||j}(\varepsilon;\bv)-e_{i||j}(\varepsilon;\bu(\varepsilon)))\sqrt{g(\varepsilon)} dx \nonumber\\
&\quad\ge \int_{\Omega} \var^p \fb^{i,p}(v_i-u_i(\varepsilon)) \sqrt{g(\varepsilon)} dx
+\int_{\Gamma_+} \var^{p}\bh^{i,p+1}(v_i-u_i(\varepsilon))\sqrt{g(\varepsilon)}  d\Gamma  \quad \forall \bv\in K(\Omega).\label{ecuacion_orden_fuerzas}
\end{align}
\end{problem}

\begin{remark}
The existence and uniqueness of solution of Problem \ref{problema_orden_fuerzas} follows using analogous arguments as those used for Problem \ref{problema_escalado}.
\end{remark}
Assume that $\btheta\in\mathcal{C}^3(\bar{\omega};\mathbb{R}^3)$ and that the scaled unknown $\bu(\varepsilon)$ admits an asymptotic expansion of the form
\begin{align}\label{desarrollo_asintotico}
\bu(\varepsilon)&= \bu^0 + \varepsilon \bu^1 + \varepsilon^2 \bu^2 +... \quad \textrm{with} \ \bu^0\neq \mathbf{0},  \end{align}
where $\bu^0\in K(\Omega),$ $ \bu^q\in [H^1(\Omega)]^3$, $q\ge1$. The assumption (\ref{desarrollo_asintotico}) implies an asymptotic expansion of the scaled linear strain as follows
\begin{align*}
\eij(\var)\equiv\eij(\varepsilon;\bu(\varepsilon))&=\frac1{\varepsilon}\eij^{-1}+ \eij^0 + \varepsilon\eij^1 + \varepsilon^2\eij^2 + \varepsilon^3\eij^3+...
\end{align*}
where,
\begin{align*}
\left\{\begin{aligned}[c]
\eab^{-1}&=0, \\
  \eatres^{-1}&=\frac{1}{2}\d_3u_\alpha^0,
   \\
  \edtres^{-1}&=\d_3u_3^0,
\end{aligned}\right.
\qquad \qquad \qquad
\left\{\begin{aligned}[c]
 \eab^0&=\frac{1}{2}(\d_\beta u_\alpha^0 + \d_\alpha u_\beta^0) - \Gamma_{\alpha\beta}^\sigma u_\sigma^0 - b_{\alpha\beta}u_3^0,
 \\
\eatres^0&=\frac{1}{2}(\d_3 u_\alpha^1 + \d_\alpha u_3^0) +   b_{\alpha}^\sigma u_\sigma^0,
\\
\edtres^0&=\d_3u_3^1,
\end{aligned}\right.\qquad
\end{align*}
\begin{equation}\label{eij_terminos_expansion_u}
\end{equation}
\begin{align*}
\left\{\begin{aligned}[c]
\eab^1&=\frac{1}{2}(\d_\beta u_\alpha^1 + \d_\alpha u_\beta^1) - \Gamma_{\alpha\beta}^\sigma u_\sigma^1 - b_{\alpha\beta}u_3^1 + x_3(b_{\beta|\alpha}^\sigma u_\sigma^0  + b_\alpha^\sigma b_{\sigma\beta}u_3^0), \\
\eatres^1&=\frac{1}{2}(\d_3 u_\alpha^2 + \d_\alpha u_3^1) +   b_{\alpha}^\sigma u_\sigma^1 + x_3b_\alpha^\tau  b_\tau^\sigma u_\sigma^0, \\
\edtres^1&=\d_3 u_3^2.
\end{aligned}\right. \qquad \quad \qquad
\end{align*}
In addition, the functions $\eij(\varepsilon;\bv) $ admit the following expansion,
\begin{align*}
\eij(\varepsilon;\bv)=\frac{1}{\varepsilon}\eij^{-1}(\bv) + \eij^0(\bv) + \varepsilon\eij^1(\bv)+...,
\end{align*}
where,
\begin{align*}
\left\{\begin{aligned}[c]
\eab^{-1}(\bv)&=0,\\
 \eatres^{-1}(\bv)&=\frac{1}{2}\d_3v_\alpha,
   \\
\edtres^{-1}(\bv)&=\d_3v_3,
\end{aligned}\right.
\qquad \qquad \quad
\left\{\begin{aligned}[c]
 \eab^0(\bv)&=\frac{1}{2}(\d_\beta v_\alpha + \d_\alpha v_\beta) - \Gamma_{\alpha\beta}^\sigma v_\sigma - b_{\alpha\beta}v_3,
 \\
\eatres^0(\bv)&=\frac{1}{2} \d_\alpha v_3 +   b_{\alpha}^\sigma v_\sigma,
\\
\edtres^0(\bv)&=0,
\end{aligned}\right.
\end{align*}
\begin{equation}\label{eij_terminos_expansion}
\end{equation}
\begin{align*}
\left\{\begin{aligned}[c]
\eab^1(\bv)&=  x_3b_{\beta|\alpha}^\sigma v_\sigma  + x_3b_\alpha^\sigma b_{\sigma\beta}v_3, \\\nonumber
\eatres^1(\bv)&= x_3b_\alpha^\tau b_\tau^\sigma v_\sigma, \\\nonumber
\edtres^1(\bv)&=0.
\end{aligned}\right. \qquad \quad \qquad \qquad \qquad \qquad \qquad \qquad \qquad \qquad
\end{align*}
Upon substitution on (\ref{ecuacion_orden_fuerzas}), we proceed to characterize the terms involved in the asymptotic expansions considering different values for $p$, that is, taking different orders for the applied forces.

We shall now identify the leading term $\bu^0$ of the expansion (\ref{desarrollo_asintotico}) by canceling the other terms of the successive powers of $\var$ in the inequalities of the Problem \ref{problema_orden_fuerzas}. We will show that $\bu^0$ is solution of a variational formulation of a two-dimensional contact problem of an elastic membrane or flexural shell, depending on several factors, and that the orders of applied forces are determined in both cases. Given $\beeta=(\eta_i)\in [H^1(\omega)]^3,$ let
 \begin{equation} \label{def_gab}
 \gab(\beeta):= \frac{1}{2}(\d_\beta\eta_\alpha + \d_\alpha\eta_\beta) - \Gamma_{\alpha\beta}^\sigma\eta_\sigma -  b_{\alpha\beta}\eta_3,
 \end{equation}
 denote the covariant components of the linearized change of metric tensor associated with a displacement field $\eta_i\ba^i$ of the surface $S$. Let us define the spaces,
 \begin{align}\nonumber
 V(\omega)&:=\{\beeta=(\eta_i)\in[H^1(\omega)]^3 ; \eta_i=0 \ \textrm{on} \ \gamma_0 \}, \\ \nonumber
 V_0(\omega)&:=\{\beeta=(\eta_i)\in V(\omega), \gamma_{\alpha\beta}(\beeta)=0  \ \textrm{in} \ \omega \}, \\ \nonumber
 V_F(\omega)&:= \{ \beeta=(\eta_i) \in H^1(\omega)\times H^1(\omega)\times H^2(\omega) ;\ \eta_i=\d_\nu \eta_3=0 \on \gamma_0,\ \gab(\beeta)=0 \en \omega \},
 \end{align}
where $\d_\nu$ stands for the outer normal derivative along the boundary. We also define the sets
 \begin{align*}
 K(\omega)&:=\{\beeta=(\eta_i)\in V(\omega); \eta_3\ge0 \ \textrm{in} \ \omega \},\nonumber\\
 K_0(\omega)&:=V_0(\omega)\cap K(\omega),\nonumber\\
 K_F(\omega)&:=V_F(\omega)\cap K(\omega).\nonumber
 \end{align*}




\begin{theorem}\label{teo_parte1}
The main leading term $\bu^0$ of the asymptotic expansion (\ref{desarrollo_asintotico}) is independent of the transversal variable $x_3$. Therefore, it can be identified with a function $\bxi^0\in [H^1(\omega)]^3$ such that $\bxi^0=\bcero$ on $\gamma_0$. Moreover,
\begin{equation}\label{5.14b}
\bxi^0\in K(\omega).
\end{equation}
\end{theorem}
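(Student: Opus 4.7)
The plan is to extract from Problem \ref{problema_orden_fuerzas} the leading-order (in powers of $\var$) inequality satisfied by $\bu^0$, and then to test it suitably so as to separately control the tangential and the normal components of $\partial_3\bu^0$.

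First I would substitute the ansatz (\ref{desarrollo_asintotico}) together with the expansions (\ref{eij_terminos_expansion_u}), (\ref{eij_terminos_expansion}), and (\ref{tensorA_tildes}) into (\ref{ecuacion_orden_fuerzas}), for an arbitrary $\bv\in K(\Omega)$ independent of $\var$. The lowest-order contribution on the left-hand side is
$$\frac{1}{\var^2}\int_\Omega A^{ijkl}(0)\sqrt{a}\,e_{kl}^{-1}(\bu^0)\bigl(e_{ij}^{-1}(\bv)-e_{ij}^{-1}(\bu^0)\bigr)\,dx,$$
whereas the right-hand side is $O(\var^p)$ with $p\ge 0$. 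Multiplying the inequality by $\var^2$ and letting $\var\to 0$ I obtain
$$\int_\Omega A^{ijkl}(0)\sqrt{a}\,e_{kl}^{-1}(\bu^0)\bigl(e_{ij}^{-1}(\bv)-e_{ij}^{-1}(\bu^0)\bigr)\,dx\ge 0\quad\forall\bv\in K(\Omega).$$
Using $e_{\alpha\beta}^{-1}\equiv 0$ and the vanishings $A^{\alpha\beta\sigma 3}(0)=A^{\alpha 333}(0)=0$ from Theorem \ref{Th_comportamiento asintotico}, together with $A^{\alpha 3\sigma 3}(0)=\mu a^{\alpha\sigma}$ and $A^{3333}(0)=\lambda+2\mu$, this reduces to
$$\int_\Omega\bigl[\mu a^{\alpha\sigma}\partial_3 u_\sigma^0\,\partial_3(v_\alpha-u_\alpha^0) + (\lambda+2\mu)\,\partial_3 u_3^0\,\partial_3(v_3-u_3^0)\bigr]\sqrt{a}\,dx\ge 0.$$

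For the tangential components I would test with $\bv=\bu^0\pm(w_1,w_2,0)$, where $w_\alpha\in H^1(\Omega)$ vanish on $\Gamma_0$; since then $v_3=u_3^0\ge 0$ on $\Gamma_C$, both test functions belong to $K(\Omega)$, and the term carrying $v_3-u_3^0$ drops out. This yields the equality
$$\int_\Omega\mu a^{\alpha\sigma}\sqrt{a}\,\partial_3 u_\sigma^0\,\partial_3 w_\alpha\,dx=0$$
for every admissible $w_\alpha$. Applying Theorem \ref{th_int_nula} componentwise to $\mu a^{\alpha\sigma}\sqrt{a}\,\partial_3 u_\sigma^0\in L^2(\Omega)$, together with the positive-definiteness of the contravariant metric $(a^{\alpha\sigma})$, I conclude that $\partial_3 u_\sigma^0=0$ a.e.\ in $\Omega$ for $\sigma=1,2$.

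The normal component is the main difficulty, because the one-sided constraint $v_3\ge 0$ on $\Gamma_C$ blocks the $\pm$-test. I would instead test with $\bv=\bu^0+(0,0,w_3)$, where $w_3\in H^1(\Omega)$ satisfies $w_3=0$ on $\Gamma_0$ and $w_3\ge 0$ in $\Omega$; then $v_3\ge u_3^0\ge 0$ on $\Gamma_C$, so $\bv\in K(\Omega)$, and the tangential terms vanish, leaving
$$\int_\Omega(\lambda+2\mu)\sqrt{a}\,\partial_3 u_3^0\,\partial_3 w_3\,dx\ge 0.$$
This is exactly of the form covered by Theorem \ref{th_int_post} (in the $H^1$ version of the subsequent remark), which forces $\partial_3 u_3^0=0$ a.e.\ in $\Omega$. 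Combining with the previous step, $\bu^0$ is independent of $x_3$ and so identifies with a function $\bxi^0\in[H^1(\omega)]^3$; the Dirichlet condition $\bu^0=\bcero$ on $\Gamma_0=\gamma_0\times[-1,1]$ transfers to $\bxi^0=\bcero$ on $\gamma_0$, and the unilateral condition $u_3^0\ge 0$ on $\Gamma_C=\omega\times\{-1\}$ becomes $\xi_3^0\ge 0$ in $\omega$, establishing (\ref{5.14b}).
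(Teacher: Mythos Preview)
Your proof is correct, but it takes a different route from the paper's. After arriving at the same leading-order inequality
\[
\int_\Omega\bigl[\mu a^{\alpha\sigma}\partial_3 u_\sigma^0\,\partial_3(v_\alpha-u_\alpha^0) + (\lambda+2\mu)\,\partial_3 u_3^0\,\partial_3(v_3-u_3^0)\bigr]\sqrt{a}\,dx\ge 0 \quad\forall\,\bv\in K(\Omega),
\]
the paper simply takes $\bv=\bzero$ and $\bv=2\bu^0$ (both are in $K(\Omega)$, since $K(\Omega)$ is a convex cone containing $\bu^0$), obtaining directly the energy identity
\[
\int_\Omega\bigl[\mu a^{\alpha\sigma}\partial_3 u_\sigma^0\,\partial_3 u_\alpha^0 + (\lambda+2\mu)(\partial_3 u_3^0)^2\bigr]\sqrt{a}\,dx=0,
\]
from which $\partial_3 u_i^0=0$ follows by positivity. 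You instead perturb around $\bu^0$, handling the tangential components via Theorem~\ref{th_int_nula} and the normal component via Theorem~\ref{th_int_post}. Your argument is valid (the test functions you build do lie in $K(\Omega)$, and both theorems apply since functions vanishing on all of $\gamma\times[-1,1]$ are a subset of those vanishing on $\Gamma_0$), but it is more laborious than necessary here: the cone structure of $K(\Omega)$ makes the paper's $\bv=\bzero,\,2\bu^0$ trick available, bypassing the need for Theorems~\ref{th_int_nula} and~\ref{th_int_post} at this stage. Those theorems become genuinely needed only later (Lemma~\ref{step2}), where the relevant inequality involves $e^0_{i\|j}$ rather than $e^{-1}_{i\|j}$ and one cannot substitute $\bv=2\bu^0$ to close an energy identity. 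A minor side note: the paper obtains the leading inequality by setting $p=-2$ and then arguing that the forces at that order must vanish, whereas you assume $p\ge 0$ from the outset; both viewpoints produce the same force-free inequality at order $\var^{-2}$.
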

\begin{proof}
Let $p=-2$ in (\ref{ecuacion_orden_fuerzas}). Hence, grouping the terms multiplied by $\var^{-2}$ (see (\ref{tensorA_tildes})) we find that
\begin{align}
&\int_{\Omega} A^{ijkl}(0)\ekl^{-1}(\eij^{-1}(\bv)-\eij^{-1}) \sqrt{a}dx
\ge\int_{\Omega } f^{i,-2} (v_i-u_i^0) \sqrt{a} dx + \intG h^{i,-1} (v_i-u_i^0) \sqrt{a}d\Gamma.\label{et2}
\end{align}
Considering $\bv\in K(\Omega)$ independent of $x_3$ (see  (\ref{eij_terminos_expansion})),  the left-hand side of the  inequality (\ref{et2}) is non-positive, and then
$$
0\ge\int_{\Omega } f^{i,-2} (v_i-u_i^0) \sqrt{a} dx + \intG h^{i,-1} (v_i-u_i^0) \sqrt{a}d\Gamma.
$$
This would imply unwanted compatibility conditions between the applied forces. To avoid it, we must take $f^{i,-2}=0$ and $h^{i,-1}=0$. So that, back on  the inequality (\ref{et2}), using (\ref{eij_terminos_expansion_u}), (\ref{eij_terminos_expansion}) and Theorem \ref{Th_comportamiento asintotico}, leads to (note that $\eij^{-1}=\eij^{-1}(\bu^0)$),
\begin{align*}\nonumber
&0\le\intO A^{ijkl}(0)\ekl^{-1}\eij^{-1}(\bv-\bu^0)\sqrt{a} dx\\ %
&\quad= \intO\left(4A^{\alpha 3 \sigma 3}(0)\estres^{-1}\eatres^{-1}(\bv-\bu^0) %
+ A^{3333}(0)\edtres^{-1} \edtres^{-1}(\bv-\bu^0)  \right) \sqrt{a}dx\nonumber\\
&\quad = \intO \left( \mu a^{\alpha\sigma} \d_3u_\sigma^0 \d_3(v_\alpha-u_\alpha^0)%
+(\lambda + 2\mu) \d_3u_3^0 \d_3 (v_3-u_3^0) \right) \sqrt{a} dx,
\end{align*}
for all $\bv=(v_i)\in K(\Omega)$. By taking alternatively $\bv=\bzero$ and $\bv=2\bu^0$, we obtain the equality
$$
\intO \left( \mu a^{\alpha\sigma} \d_3u_\sigma^0 \d_3u_\alpha^0%
+(\lambda + 2\mu) \d_3u_3^0 \d_3 u_3^0 \right) \sqrt{a} dx=0.
$$
From here, similar arguments to those used in \cite[p. 167]{Ciarlet4b} show that
\begin{equation*}
\d_3 u_i^0=0.
\end{equation*}
Therefore, we have found that the main term $\bu^0$ of the asymptotic expansion is independent of the transversal variable, hence, it can be identified with a function $\bxi^0\in [H^1(\omega)]^3$ such that $\bxi^0=\bcero$ on $\gamma_0$, and $\xi_3^0\ge0$, this is, $\bxi^0\in K(\omega)$.
\end{proof}


Now, let us denote by $\a$ the contravariant components of the fourth order two-dimensional tensor defined as follows:
\begin{align} \label{tensor_a_bidimensional}
   \a&:=\frac{4\lambda\mu}{\lambda + 2\mu}a^{\alpha\beta}a^{\sigma\tau} + 2\mu \ten.
\end{align}

We recall the following result (see \cite[Theorem 3.3-2]{Ciarlet4b}) regarding the ellipticity of this tensor.

\begin{theorem}
 Let $\omega$  be a domain in $\mathbb{R}^2$, let $\btheta\in\mathcal{C}^1(\bar{\omega};\mathbb{R}^3)$ be an injective mapping such that the two vectors $\ba_\alpha=\d_\alpha\btheta$ are linearly independent at all points of $\bar{\omega}$, let $a^{\alpha\beta}$ denote the contravariant components of the metric tensor of $S=\btheta(\bar{\omega})$.  Let us consider the contravariant components of the scaled fourth order two-dimensional  tensor of the shell, $\a$  defined in (\ref{tensor_a_bidimensional}). Assume that $\lambda\geq0$ and $\mu>0$. Then there exists a constant $c_e>0$ independent of the variables and $\var$,  such that
 \begin{equation} \label{tensor_a_elip}
   \sum_{\alpha,\beta}|t_{\alpha\beta}|^2\leq c_e \a(\by)t_{\sigma\tau}t_{\alpha\beta},
 \end{equation}
 for all $\by\in\bar{\omega}$ and all $\bt=(t_{\alpha\beta})\in\mathbb{S}^2$.
 \end{theorem}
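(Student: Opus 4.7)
The plan is to expand the quadratic form $a^{\alpha\beta\sigma\tau}(y)t_{\sigma\tau}t_{\alpha\beta}$ and control its two pieces separately. Using the symmetry $t_{\sigma\tau}=t_{\tau\sigma}$, direct substitution of the definition (\ref{tensor_a_bidimensional}) yields
$$
a^{\alpha\beta\sigma\tau}(y)t_{\sigma\tau}t_{\alpha\beta}
= \frac{4\lambda\mu}{\lambda+2\mu}(a^{\alpha\beta}t_{\alpha\beta})^2 + 4\mu\, a^{\alpha\sigma}a^{\beta\tau}t_{\sigma\tau}t_{\alpha\beta}.
$$
Since $\lambda\geq 0$ and $\mu>0$, the prefactor $4\lambda\mu/(\lambda+2\mu)$ is non-negative, so the first summand is $\geq 0$ and can be discarded when proving the lower bound. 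Hence the whole inequality reduces to showing that the second, purely $\mu$-dependent, term controls $\sum_{\alpha,\beta}|t_{\alpha\beta}|^2$.

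The core of the argument is a linear-algebraic identification. For each $y\in\bar\omega$ the matrix $A(y):=(a^{\alpha\beta}(y))$ is symmetric and positive definite, since it is the inverse of the first fundamental form $(a_{\alpha\beta}(y))$, which is positive definite by the linear independence of $\ba_1,\ba_2$. Writing $T:=(t_{\alpha\beta})$, an elementary index manipulation shows
$$
a^{\alpha\sigma}a^{\beta\tau}t_{\sigma\tau}t_{\alpha\beta} = \operatorname{tr}\bigl((AT)^2\bigr) = \operatorname{tr}\bigl((A^{1/2}TA^{1/2})^2\bigr)= \sum_{\alpha,\beta}M_{\alpha\beta}^2,
$$
where $M:=A^{1/2}TA^{1/2}$ is symmetric. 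A standard spectral computation applied to the symmetric positive definite matrix $A^{1/2}$ yields $\sum_{\alpha,\beta}M_{\alpha\beta}^2 \geq \lambda_{\min}(A(y))^2\sum_{\alpha,\beta}|t_{\alpha\beta}|^2$.

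The last step is to turn this pointwise estimate into a uniform one. Because $\btheta\in\mathcal{C}^1(\bar\omega;\mathbb{R}^3)$, the entries $a^{\alpha\beta}(y)$ depend continuously on $y$ on the compact set $\bar\omega$, so $y\mapsto\lambda_{\min}(A(y))$ is continuous and strictly positive on $\bar\omega$, hence bounded below by some $\kappa>0$. Choosing $c_e:=1/(4\mu\kappa^2)$ gives the required inequality uniformly in $y\in\bar\omega$; independence from $\var$ is automatic since neither the Lam\'e constants nor the surface metric depend on $\var$. The only mildly delicate step is the spectral inequality $\|A^{1/2}TA^{1/2}\|^2\geq \lambda_{\min}(A)^2\|T\|^2$ for symmetric $T$; the remainder is bookkeeping with indices and a compactness argument.
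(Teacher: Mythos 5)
Your proof is correct, but note that the paper does not actually prove this statement: it is recalled verbatim from Ciarlet's book (Theorem 3.3-2 of \cite{Ciarlet4b}) and used as a black box, so any self-contained argument is "different" from the paper by default. Your route is essentially the standard one behind that cited result: split the quadratic form into the $\lambda$-part, which is $\frac{4\lambda\mu}{\lambda+2\mu}(a^{\alpha\beta}t_{\alpha\beta})^2\ge 0$ and can be dropped, and the $\mu$-part, which by symmetry of $\bt$ equals $4\mu\,a^{\alpha\sigma}a^{\beta\tau}t_{\sigma\tau}t_{\alpha\beta}=4\mu\,\mathrm{tr}\bigl((AT)^2\bigr)$ with $A=(a^{\alpha\beta})$, $T=(t_{\alpha\beta})$; then use pointwise positive definiteness of $A$ (inverse of the first fundamental form) plus compactness of $\bar\omega$ and continuity of $\by\mapsto a^{\alpha\beta}(\by)$ to get a uniform lower bound $\lambda_{\min}(A(\by))\ge\kappa>0$, and conclude with $c_e=1/(4\mu\kappa^2)$; independence of $\var$ is indeed automatic since $\a$ does not involve $\var$. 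All the individual steps check out, including the index identity $a^{\alpha\sigma}a^{\beta\tau}t_{\sigma\tau}t_{\alpha\beta}=\mathrm{tr}\bigl((A^{1/2}TA^{1/2})^2\bigr)$ and the spectral estimate $\|A^{1/2}TA^{1/2}\|_F\ge\lambda_{\min}(A)\,\|T\|_F$ (which follows, e.g., from $\|A^{-1/2}SA^{-1/2}\|_F\le\lambda_{\min}(A)^{-1}\|S\|_F$ applied to $S=A^{1/2}TA^{1/2}$); you might spell out that last inequality in one line, since it is the only step you wave at rather than prove, but it is a standard fact. What your write-up buys over the paper's citation is a short, elementary, self-contained verification; what the citation buys is brevity and consistency with the rest of the paper's reliance on \cite{Ciarlet4b} for the geometric preliminaries.
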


\begin{theorem}\label{teo_parte2}
Assume that  $V_0(\omega)=\{\bcero\}$. Then, $\bxi^0$ verifies the following variational formulation of a two-dimensional contact problem for elastic membrane shells: Find $\bxi^0:\omega \longrightarrow \mathbb{R}^3$ such that,
    \begin{align}\nonumber
    & \bxi^0\in K(\omega),\nonumber\\
   &\int_{\omega} \a\gst(\bxi^0)\gab(\beeta-\bxi^0)\sqrt{a}dy%
    \ge\int_{\omega}p^{i,0}(\eta_i-\xi_i^0)\sqrt{a}dy \ \forall \beeta=(\eta_i)\in K(\omega),\label{ec_paso4}
   \end{align}
   where,
   \begin{align}\label{p0}
   p^{i,0}:=\int_{-1}^{1}\fb^{i,0}dx_3+h_+^{i,1} \quad \textrm{and} \quad h_+^{i,1}=\bh^{i,1}(\cdot,+1).
   \end{align}
\end{theorem}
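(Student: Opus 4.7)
The plan is to set $p=0$ in Problem \ref{problema_orden_fuerzas} (matching the stated orders of the applied forces in the membrane case), substitute the asymptotic expansions (\ref{desarrollo_asintotico}) and (\ref{eij_terminos_expansion_u})--(\ref{eij_terminos_expansion}) together with the expansion (\ref{tensorA_tildes}) of $A^{ijkl}(\var)\sqrt{g(\var)}$, and identify the leading-order inequality. Theorem \ref{teo_parte1} already yields $\ekl^{-1}\equiv 0$, so the left-hand side begins at order $\var^0$. Restricting to test functions $\bv=\beeta\in K(\omega)$ viewed as $x_3$-independent elements of $K(\Omega)$ (which kills $\eij^{-1}(\bv)$), the $\var^0$-terms of (\ref{ecuacion_orden_fuerzas}) read
$$
\intO A^{ijkl}(0)\ekl^0\bigl(\eij^0(\bv)-\eij^0\bigr)\sqrt{a}\,dx \ge \intO f^{i,0}(v_i-\xi_i^0)\sqrt{a}\,dx + \intG h^{i,1}(v_i-\xi_i^0)\sqrt{a}\,d\Gamma.
$$

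The next step is to remove the dependence of $\ekl^0$ on the auxiliary field $\bu^1$, which enters only through the transverse strains $\estres^0$ and $\edtres^0$. I would test with $\bv=\bxi^0\pm\bw$, where $\bw$ vanishes on $\Gamma_0$ and is chosen to respect the unilateral constraint on $\Gamma_C$: first with $w_3\equiv 0$ to vary the tangential components in both directions, and then with $w_3$ vanishing only on $\Gamma_C$ to vary the normal component symmetrically. Combining these variations with Theorems \ref{th_int_nula} and \ref{th_int_post} yields the pointwise identifications $\estres^0=0$ and $A^{3333}(0)\edtres^0 + A^{33\sigma\tau}(0)\est^0=0$ in $\Omega$, which, using $\est^0=\gst(\bxi^0)$ (since $\bu^0=\bxi^0$ is $x_3$-independent), collapse to
$$
\edtres^0 = -\frac{\lambda}{\lambda+2\mu}\,a^{\sigma\tau}\gst(\bxi^0).
$$

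Reinserting these relations into the $\var^0$-inequality, the transverse coupling disappears and the surviving in-plane combination $A^{\alpha\beta\sigma\tau}(0)\est^0 + A^{\alpha\beta 33}(0)\edtres^0$ simplifies by a direct algebraic reduction based on Theorem \ref{Th_comportamiento asintotico} to $\tfrac{1}{2}\,\a\,\gst(\bxi^0)$. Integration in $x_3\in(-1,1)$ then produces a factor $2$ that cancels this $\tfrac{1}{2}$, while the sum of the body- and surface-force contributions on $x_3$-independent test functions collapses to $\int_\omega p^{i,0}(\eta_i-\xi_i^0)\sqrt{a}\,dy$ with $p^{i,0}$ as in (\ref{p0}). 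The hypothesis $V_0(\omega)=\{\bcero\}$ is used to ensure that the resulting two-dimensional membrane inequality is non-degenerate and that $\bxi^0$ is indeed characterized among the elements of $K(\omega)$.

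The principal obstacle, relative to the classical no-contact derivation in \cite{Ciarlet4b}, is that the pointwise identifications of $\estres^0$ and $\edtres^0$ must be extracted from an inequality rather than from an equation. This forces one to construct pairs $\bxi^0\pm\bw$ of admissible test functions that respect the unilateral condition $v_3\ge 0$ on $\Gamma_C$; here Theorem \ref{th_int_post} is exactly what allows one-sided variations to be converted into pointwise relations, and its combination with Theorem \ref{th_int_nula} for the two-sided ones is what makes the argument go through.
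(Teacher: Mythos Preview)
Your overall strategy matches the paper's: identify the transverse strains $\estres^0,\edtres^0$, substitute them back, reduce the in-plane combination to $\tfrac12\a\gst(\bxi^0)$, and integrate over $x_3$. The paper packages the first step as a separate Lemma~\ref{step2} (taking $p=-1$) and then proves the theorem at $p=0$; you attempt both at $p=0$.

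There is, however, a genuine slip in your order bookkeeping. After $\ekl^{-1}=0$ the left-hand side of (\ref{ecuacion_orden_fuerzas}) does \emph{not} begin at order $\var^0$: for a general $\bv\in K(\Omega)$ the leading contribution is
\[
\var^{-1}\intO A^{ijkl}(0)\,\ekl^0\,\eij^{-1}(\bv)\sqrt{a}\,dx,
\]
and it is \emph{this} $\var^{-1}$-level inequality (with no force terms, since $p=0$) that carries the $\d_3 v_i$ factors needed to invoke Theorems~\ref{th_int_nula}--\ref{th_int_post} and extract $\estres^0=0$ and $\lambda a^{\alpha\beta}\eab^0+(\lambda+2\mu)\edtres^0=0$. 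The $\var^0$-inequality you displayed was obtained only after restricting to $x_3$-independent $\bv$, so it contains no $\d_3 v_i$ and cannot yield those pointwise identifications; your test functions $\bxi^0\pm\bw$ with $x_3$-dependent $\bw$ must therefore be fed into the $\var^{-1}$-level relation, which you never wrote down. Once you correct this, the argument is exactly the paper's Lemma~\ref{step2} followed by the body of the proof of Theorem~\ref{teo_parte2}.

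One minor point of divergence: your device of taking $w_3=0$ on $\Gamma_C$ to make both $\bxi^0\pm\bw$ admissible turns the constrained inequality into an equality and would let you conclude via Theorem~\ref{th_int_nula} alone. The paper instead keeps the one-sided constraint $v_3\ge 0$ and applies Theorem~\ref{th_int_post} directly to the resulting inequality; both routes lead to (\ref{edtres_cero}). Finally, note that the hypothesis $V_0(\omega)=\{\bcero\}$ is not actually used in the derivation of (\ref{ec_paso4}); in the paper it serves only to guarantee uniqueness and to separate this case from the flexural one.
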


Before providing the proof we need some auxiliary results, which are given in the following lemmas:


\begin{lemma}\label{step1}
The negative order terms of the scaled linearized strains are null, i.e.,
$$
\eij^{-1}=0 \en \Omega.
$$
\end{lemma}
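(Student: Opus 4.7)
The plan is to invoke Theorem \ref{teo_parte1} directly, since the lemma is an almost immediate corollary. Recall that from the definitions displayed in \eqref{eij_terminos_expansion_u}, the negative-order scaled strain components are
\begin{align*}
\eab^{-1} = 0, \qquad \eatres^{-1} = \tfrac{1}{2}\d_3 u_\alpha^0, \qquad \edtres^{-1} = \d_3 u_3^0.
\end{align*}
The first identity already gives $\eab^{-1}=0$ unconditionally, so the only nontrivial work is to show that the components involving $\d_3 u_i^0$ vanish.

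For that, I would simply cite Theorem \ref{teo_parte1}, which asserts that $\bu^0$ is independent of the transversal variable $x_3$ and hence can be identified with $\bxi^0 \in K(\omega)\subset[H^1(\omega)]^3$. Consequently $\d_3 u_i^0 = 0$ in $\Omega$ for $i=1,2,3$, which yields $\eatres^{-1} = 0$ and $\edtres^{-1} = 0$. Combining these three facts gives $\eij^{-1} = 0$ in $\Omega$ for every pair $(i,j)$, which is the claim.

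I do not foresee any obstacles: all the actual analytical work (canceling the $\var^{-2}$ powers in Problem \ref{problema_orden_fuerzas}, arguing that $f^{i,-2}=0$ and $h^{i,-1}=0$ to avoid unwanted compatibility conditions, exploiting the ellipticity bounds from Theorem \ref{Th_comportamiento asintotico}, and deducing $\d_3 u_i^0=0$) has already been carried out inside the proof of Theorem \ref{teo_parte1}. The lemma is just repackaging that conclusion in terms of the $\eij^{-1}$ symbols, so the proof should amount to one short paragraph citing the previous theorem and substituting into the three defining formulas.
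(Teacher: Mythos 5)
Your argument is correct and is exactly the paper's own proof: the paper likewise cites Theorem \ref{teo_parte1} to get $\d_3 u_i^0=0$ and then reads off $\eij^{-1}=\eij^{-1}(\bu^0)=0$ from the expansion formulas (\ref{eij_terminos_expansion_u}). Nothing is missing.
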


\begin{proof}
Since $\bu^0$ is independent of $x_3$ (see Theorem \ref{teo_parte1}), by using (\ref{eij_terminos_expansion_u}) and (\ref{eij_terminos_expansion}) we obtain that
\begin{equation}\label{eij_1}
\eij^{-1}=\eij^{-1}(\bu^0)=0 \en \Omega.
\end{equation}
\end{proof}


\begin{lemma}\label{step2}
The zeroth-order terms of the scaled linearized strains are identified. On one hand,
\begin{equation}\label{ea3}
\eatres^0=0 \en \Omega.
\end{equation}
On the other hand, we obtain that
\begin{align}\label{edtres_cero}
\edtres^0= - \frac{\lambda}{\lambda+2\mu}a^{\alpha \beta }\eab^0 \en \Omega.
\end{align}
\end{lemma}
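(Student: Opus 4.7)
The plan is to extract the order-$\var^{-1}$ contribution of the variational inequality in Problem \ref{problema_orden_fuerzas} with $p=-2$ (as fixed by Theorem \ref{teo_parte1}), and then to test the resulting functional inequality against two complementary families of admissible $\bv\in K(\Omega)$ chosen to isolate $\eatres^0$ and $\edtres^0$ one after the other. Since Theorem \ref{teo_parte1} already forced $f^{i,-2}=0$ and $h^{i,-1}=0$, the right-hand side of (\ref{ecuacion_orden_fuerzas}) is identically zero at every order in $\var$, so only the left-hand side needs to be expanded.

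Combining (\ref{tensorA_tildes}), the expansions of $\eij(\var;\bu(\var))$ and of $\eij(\var;\bv)-\eij(\var;\bu(\var))$, and the vanishing $\eij^{-1}=0$ from Lemma \ref{step1}, the coefficient of $\var^{-1}$ collapses to
\begin{equation*}
\int_\Omega A^{ijkl}(0)\,e^0_{k||l}\,e^{-1}_{i||j}(\bv)\sqrt{a}\,dx \ \ge\ 0\quad\forall\,\bv\in K(\Omega).
\end{equation*}
Writing $v_i=\xi_i^0+w_i$ with $w_i\in H^1(\Omega)$ vanishing on $\Gamma_0$ and $\xi_3^0+w_3\ge 0$ on $\Gamma_C$, using Theorem \ref{Th_comportamiento asintotico} and the identities $e^{-1}_{\alpha||\beta}(\bv)=0$, $e^{-1}_{\alpha||3}(\bv)=\tfrac12\d_3 w_\alpha$, $e^{-1}_{3||3}(\bv)=\d_3 w_3$ (since $\bxi^0$ is independent of $x_3$), this reduces to
\begin{equation*}
\int_\Omega\Bigl[2\mu\,a^{\alpha\sigma}\,e^0_{\sigma||3}\,\d_3 w_\alpha+\bigl(\lambda a^{\alpha\beta}e^0_{\alpha||\beta}+(\lambda+2\mu)e^0_{3||3}\bigr)\d_3 w_3\Bigr]\sqrt{a}\,dx\ \ge\ 0.
\end{equation*}

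To obtain (\ref{ea3}) I choose $w_3\equiv 0$ and $w_\alpha\in H^1(\Omega)$ arbitrary vanishing on $\Gamma_0$; the constraint on $\Gamma_C$ becomes vacuous, so both signs of $w_\alpha$ are admissible and the inequality collapses to the equality $\int_\Omega 2\mu a^{\alpha\sigma}e^0_{\sigma||3}\d_3 w_\alpha\sqrt{a}\,dx=0$ for every such $w_\alpha$; Theorem \ref{th_int_nula} applied componentwise, together with $\mu>0$, $\sqrt{a}>0$, and the invertibility of $(a^{\alpha\sigma})$, yields (\ref{ea3}). For (\ref{edtres_cero}) I instead choose $w_\alpha\equiv 0$ and $w_3\in H^1(\Omega)$ with $w_3=0$ on $\Gamma_0$ and $w_3\ge 0$ throughout $\Omega$ (a fortiori on $\Gamma_C$), which is admissible; the inequality reduces to $\int_\Omega\bigl(\lambda a^{\alpha\beta}e^0_{\alpha||\beta}+(\lambda+2\mu)e^0_{3||3}\bigr)\d_3 w_3\sqrt{a}\,dx\ge 0$ for every such $w_3$, and the remark following Theorem \ref{th_int_post} gives $\lambda a^{\alpha\beta}e^0_{\alpha||\beta}+(\lambda+2\mu)e^0_{3||3}=0$ a.e.\ in $\Omega$, which rearranges to (\ref{edtres_cero}) since $\lambda+2\mu>0$.

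The main obstacle is the asymmetry created by the unilateral constraint $v_3\ge 0$ on $\Gamma_C$, which prevents taking $\pm\bv$ directly to promote the variational inequality to an equality. The two-pronged test-function choice circumvents this: setting $w_3\equiv 0$ trivialises the obstacle condition and restores $\pm$-symmetry, allowing Theorem \ref{th_int_nula}, while $w_3\ge 0$ in $\Omega$ matches the hypothesis of Theorem \ref{th_int_post} exactly. Everything else is routine bookkeeping of the special structure of $A^{ijkl}(0)$ from Theorem \ref{Th_comportamiento asintotico}.
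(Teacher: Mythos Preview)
Your proposal is correct and follows essentially the same approach as the paper: extract the order-$\var^{-1}$ inequality $\int_\Omega A^{ijkl}(0)e^0_{k||l}e^{-1}_{i||j}(\bv)\sqrt{a}\,dx\ge 0$, then use tangential test functions (with $\pm$-symmetry and Theorem \ref{th_int_nula}) for (\ref{ea3}) and nonnegative normal test functions with Theorem \ref{th_int_post} for (\ref{edtres_cero}). The only cosmetic difference is that the paper takes $p=-1$ and first argues $f^{i,-1}=h^{i,0}=0$ before reaching that same inequality, whereas you keep $p=-2$ with forces already zero; both routes land on the identical key inequality and the identical test-function strategy.
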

\begin{proof}
Let $p=-1$ in (\ref{ecuacion_orden_fuerzas}). Grouping the terms multiplied by $\var^{-1}$ and using (\ref{eij_1}), we find that
\begin{align*}
&\intO A^{ijkl}(0)\ekl^0\eij^{-1}(\bv-\bu^0)\sqrt{a}dx\ge\intO f^{i,-1}(v_i-u_i^0) \sqrt{a}dx + \intG h^{i,0}(v_i-u_i^0) \sqrt{a} d\Gamma,
\end{align*}
for all $\bv\in K(\Omega)$. By considering a test function $\bv$ independent of $x_3$, and using similar arguments as in the proof of Theorem \ref{teo_parte1}, we obtain  that $f^{i,-1}$ and $h^{i,0}$ must be zero (recall (\ref{eij_1}), as well). Therefore, from the left-hand side of the last inequality we have,
\begin{align} \nonumber
&0\le\intO A^{ijkl}(0)\ekl^0\eij^{-1}(\bv)\sqrt{a}dx\nonumber\\
&\quad =\intO 4A^{\alpha 3 \sigma 3}(0)\eatres^0\estres^{-1}(\bv)\sqrt{a}dx%
+\intO\left( A^{\alpha\beta 33}(0)\eab^0 + A^{3333}(0)\edtres^0  \right)\edtres^{-1}(\bv) \sqrt{a}dx\nonumber\\
&\quad= \intO \left(2\mu a^{\alpha \sigma}\eatres^0\d_3v_\sigma + \left(\lambda a^{\alpha\beta}\eab^0+ (\lambda + 2\mu)\edtres^0 \right)\d_3v_3\right)\sqrt{a}dx,\label{ecuacion_integral1}
\end{align}
for all $\bv\in K(\Omega)$. If we take $\bv\in K(\Omega)$ such that $v_2=v_3=0$ and $v_1=v_3=0$ alternatively, we have the inequality
$$
\intO 2\mu a^{\alpha \sigma}\eatres^0\d_3v_\sigma \sqrt{a}dx\ge0.
$$
The equality is consequence of taking both $v_\sigma$ and its opposite (recall that $v_3=0$), and by using Theorem \ref{th_int_nula}, we conclude
\begin{align*}
e_{\alpha||3}^0=0 \en \Omega.
\end{align*}
Going back to (\ref{ecuacion_integral1}), we obtain
\begin{equation}\label{x1}
\intO  \left(\lambda a^{\alpha\beta}\eab^0+ (\lambda + 2\mu)\edtres^0 \right)\d_3v_3\sqrt{a}dx\ge0,
\end{equation}
for all $v_3\in H^1(\Omega)$ with $v_3=0 \en \Gamma_0$ and $v_3\ge0$ on $\Gamma_C$. In particular, the previous inequality is valid for the test functions in Theorem \ref{th_int_post}, and therefore we obtain
\begin{align} \label{ecuacion_casuistica}
\lambda a^{\alpha\beta} \eab^0 + (\lambda+2\mu) \edtres^0=0.
\end{align}

That is, we find (\ref{edtres_cero}).
\end{proof}

We are now ready to give the proof of Theorem \ref{teo_parte2}:

\begin{proof}[Theorem \ref{teo_parte2}]

Let $p=0$ in (\ref{ecuacion_orden_fuerzas}). Grouping the terms multiplied by $\var^{-1}$ and $\var^0$, and taking into account (\ref{tensorA_tildes}) and (\ref{eij_1}), we find that
\begin{align}
& \var^{-1}\intO A^{ijkl}(0)\ekl^0\eij^{-1}(\bv)\sqrt{a}dx+\intO A^{ijkl}(0) \ekl^0 (\eij^0(\bv)- \eij^0) \sqrt{a} dx\nonumber\\%
&\quad+\intO A^{ijkl}(0) \ekl^1\eij^{-1}(\bv) \sqrt{a} dx + \intO \tilde{A}^{ijkl,1}\ekl^0\eij^{-1}(\bv)dx\nonumber\\
& \quad \ge \intO f^{i,0} (v_i-u_i^0) \sqrt{a} dx + \intG h^{i,1}(v_i-u_i^0) \sqrt{a} d\Gamma,\label{ec_step3}
\end{align}
for all $\bv\in K(\Omega)$. Taking  $\bv\in K(\Omega)$ independent of the transversal variable $x_3$, it can be identified with a function $\beeta\in K(\omega)$, and we have by (\ref{eij_terminos_expansion}) that $\eij^{-1}(\bv)=0$. Moreover, since $\eatres^0=0$ by (\ref{ea3}), and since $\eab^0=\eab^0(\bu^0)=\eab^0(\bxi^0)$, we have
\begin{align}
&\intO A^{ijkl}(0)\ekl^0 (\eij^0(\beeta)-\eij^0) \sqrt{a} dx\nonumber\\
& \quad = \intO \left( \lambda a^{\alpha \beta} a^{\sigma \tau} + \mu (a^{\alpha \sigma}a^{\beta \tau} + a^{\alpha \tau}a^{\beta\sigma})  \right) \est^0 \eab^0(\beeta-\bxi^0) \sqrt{a} dx\nonumber\\%
&\qquad+ \intO \lambda a^{\alpha\beta}\edtres^0 \eab^0 (\beeta-\bxi^0)\sqrt{a} dx-\intO(\lambda+2\mu)\edtres^0\edtres^0\sqrt{a} dx\nonumber\\
&\quad \ge \intO f^{i,0} (\eta_i-\xi_i^0) \sqrt{a} dx + \intG h^{i,1}(\eta_i-\xi_i^0) \sqrt{a} d\Gamma .\label{ref2}
\end{align}
Now, by using the expression of $\edtres^0$ found in (\ref{edtres_cero}), we 
have that
\begin{align}
&\frac{1}{2}\intO \a \est^0(\bxi^0)\eab^0(\beeta-\bxi^0)\sqrt{a}dx \nonumber\\
& \quad \ge \intO f^{i,0}(\eta_i-\xi_i^0) \sqrt{a}dx + \intG h^{i,1} (\eta_i-\xi_i^0) \sqrt{a} d \Gamma, \ \forall\beeta\in K(\omega),\label{5.22b}
\end{align}
where  $\a$ denotes the contravariant components of the fourth order two-dimensional elasticity tensor defined in (\ref{tensor_a_bidimensional}).

Further, notice that if $\beeta=(\eta_i)\in H^1(\omega)\times H^1(\omega) \times L^2(\omega),$ then $\gab(\beeta)\in L^2(\omega)$. Hence, the equalities
\begin{equation} \label{et3}
\eab^0=\gab(\bxi^0), \quad  \eab^0(\beeta)=\gab(\beeta)\ \forall\ \beeta\in K(\omega),
\end{equation}
follow from the definitions (\ref{eij_terminos_expansion_u}), (\ref{eij_terminos_expansion}) and (\ref{def_gab}).
We end by combining (\ref{5.14b}), (\ref{5.22b}) and (\ref{et3}).
\end{proof}

\begin{remark}
The limit problem in Theorem \ref{teo_parte2} can be described as the scaled variational formulation of a two-dimensional unilateral contact problem  for an elastic membrane shell. More precisely, it can be described as an obstacle problem, since now the conditions are {\em in} the domain, $\omega$, while in contact problems as the original three-dimensional problem, the conditions are {\em on} the boundary of the domain, $\Omega$.
\end{remark}

\begin{remark}
Since by hipothesis, we have that $V_0(\omega)=\{\bcero\}$, it is easy to show that $\bxi^0$ is unique. The existence requires to pose the problem in a space wider than $V(\omega)$. This issue is discussed in the next section of this paper.
\end{remark}


Now, let
\begin{equation}\label{rab}
\rho_{\alpha\beta}(\beeta)= \d_{\alpha\beta}\eta_3 - \Gamma_{\alpha\beta}^\sigma \d_\sigma\eta_3 - b_\alpha^\sigma b_{\sigma\beta} \eta_3 + b_\alpha^\sigma (\d_\beta\eta_\sigma- \Gamma_{\beta\sigma}^\tau \eta_\tau)
+ b_\beta^\tau(\d_\alpha\eta_\tau-\Gamma_{\alpha\tau}^\sigma\eta_\sigma ) + b^\tau_{\beta|\alpha} \eta_\tau,
\end{equation}
denote the covariant components of the linearized change of curvature tensor associated with a displacement field $\beeta=\eta_i \ba^i$ of the surface $S$.

\begin{theorem}\label{teo_parte3}
Assume that that $\bu^1\in K(\Omega)$ and $K_F(\omega)\neq\{\bcero\}$. Then $\bxi^0$  is solution of the following variational formulation of a two-dimensional contact problem for elastic flexural shells: Find $\bxi^0:\omega \longrightarrow\mathbb{R}^3$ such that,
    \begin{align*}
    & \bxi^0\in K_F(\omega),\\
   &\frac{1}{3}\int_{\omega} \a\rst(\bxi^0)\rab(\beeta-\bxi^0)\sqrt{a}dy %
   \ge\int_{\omega}p^{i,2}(\eta_i-\xi_i^0)\sqrt{a}dy \ \forall \beeta=(\eta_i)\in K_F(\omega),
   \end{align*}

   where,
   \begin{align} \label{p2}
 p^{i,2}:=\int_{-1}^{1}\fb^{i,2}dx_3+h_+^{i,3}, \quad \textrm{and} \quad h_{+}^{i,3}=\bh^{i,3}(\cdot,+1).
   \end{align}
\end{theorem}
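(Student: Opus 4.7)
The plan is to substitute $p=2$ in the scaled variational inequality (\ref{ecuacion_orden_fuerzas}), insert the asymptotic expansion (\ref{desarrollo_asintotico}) together with the induced expansions (\ref{eij_terminos_expansion_u})--(\ref{eij_terminos_expansion}) of the scaled strains and the expansion (\ref{tensorA_tildes}) of the elasticity tensor, and identify the terms at successive powers of $\var$. Since with $p=2$ the right-hand side is $O(\var^2)$, the arguments of Theorem \ref{teo_parte1} and of Lemmas \ref{step1}--\ref{step2} apply \emph{a fortiori} and yield $\eij^{-1}=0$, $\eatres^0=0$, the relation $\edtres^0=-\tfrac{\lambda}{\lambda+2\mu}a^{\alpha\beta}\eab^0$, and the identification of $\bu^0$ with some $\bxi^0\in K(\omega)$.

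Next, I would examine the $\var^0$ balance. Since $p=2$, the right-hand side contributes only at order $\var^2$ and so vanishes here. Choosing test functions $\bv$ independent of $x_3$, identified with $\beeta\in K(\omega)$ (so that $\eij^{-1}(\bv)=0$), and using the identities above together with the definition (\ref{tensor_a_bidimensional}) of $\a$, the balance reduces to
\[
\tfrac{1}{2}\int_\omega \a\gst(\bxi^0)\gab(\beeta-\bxi^0)\sqrt{a}\,dy\ge 0\quad \forall \beeta\in K(\omega).
\]
Since $K(\omega)$ is a convex cone, the choices $\beeta=\bzero$ and $\beeta=2\bxi^0$ are both admissible and produce opposite inequalities, so we obtain the equality $\int_\omega \a\gst(\bxi^0)\gab(\bxi^0)\sqrt{a}\,dy=0$. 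The ellipticity (\ref{tensor_a_elip}) of $\a$ then forces $\gab(\bxi^0)=0$ in $\omega$. Together with $\bxi^0\in K(\omega)$ and the regularity produced by the rigidity-type argument for surfaces (using $\gab(\bxi^0)=0$ and the boundary conditions on $\gamma_0$ to recover $\xi_3^0\in H^2(\omega)$ and $\partial_\nu\xi_3^0=0$), this gives $\bxi^0\in K_F(\omega)$.

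For the flexural inequality itself, I would work at order $\var^2$. With $\gab(\bxi^0)=0$ we have $\eab^0=\edtres^0=0$. Using the hypothesis $\bu^1\in K(\Omega)$ and the cone structure of $K(\Omega)$, I would test with functions of the form $\bv=\bxi^0+\var(\beeta-\bxi^0)+\var\bu^1$, $\beeta\in K_F(\omega)$, which are admissible for all $\var\in[0,1]$ by convexity and conicity. Expanding the $\var^2$-order terms and using the expressions (\ref{eij_terminos_expansion_u})--(\ref{eij_terminos_expansion}), the surviving contribution on the left-hand side involves $\a\,\eab^1\,\eab^1(\beeta-\bxi^0)$; after eliminating the tangential corrector via an auxiliary equation obtained at lower order (forcing $\gab(\bxi^1)$ to match the induced inplane stretching) and integrating in $x_3$, with $\int_{-1}^{1}x_3^2\,dx_3=\tfrac{2}{3}$ producing the prefactor $\tfrac{1}{3}$, the surface term reduces to $\int_\omega \a\rst(\bxi^0)\rab(\beeta-\bxi^0)\sqrt{a}\,dy$, where $\rho_{\alpha\beta}$ is as in (\ref{rab}). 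Matching against the $\var^2$-order of the right-hand side reconstructs $p^{i,2}$ as in (\ref{p2}) from $\fb^{i,2}$ and the trace $\bh^{i,3}(\cdot,+1)$.

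The main obstacle will be the last step: identifying admissible test functions within the convex cone $K(\Omega)$ that isolate the $\rho_{\alpha\beta}$-dependence while eliminating the unknown corrector $\bxi^1$. The hypothesis $\bu^1\in K(\Omega)$ is essential precisely because conicity of $K(\Omega)$ then guarantees that $\bxi^0+\var\bu^1+\var(\beeta-\bxi^0)$ is admissible for small $\var\ge 0$, which is what makes the variational-inequality computation go through at order $\var^2$. A secondary difficulty is the regularity upgrade $\xi_3^0\in H^2(\omega)$ and $\partial_\nu\xi_3^0=0$ on $\gamma_0$ in Step 2, which rests on a rigidity lemma adapted to the mixed Dirichlet/unilateral boundary data.
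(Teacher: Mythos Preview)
Your first two steps are essentially correct and match the paper's Lemma \ref{step5}: the vanishing of $\eij^{-1}$ and of $\eij^{0}$, and the membership $\bxi^0\in K_F(\omega)$ come out as you describe. One correction: the regularity upgrade $\xi_3^0\in H^2(\omega)$, $\partial_\nu\xi_3^0=0$ on $\gamma_0$ is not an abstract rigidity lemma; it follows concretely from $\eatres^0=0$, which gives $\partial_3 u_\alpha^1=-(\partial_\alpha\xi_3^0+2b_\alpha^\sigma\xi_\sigma^0)$, so the hypothesis $\bu^1\in K(\Omega)\subset[H^1(\Omega)]^3$ forces $\partial_\alpha\xi_3^0\in H^1(\omega)$ and the boundary condition on $\gamma_0$.

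The genuine gap is in your third step. When you pass to the $\var^2$ balance with a test function identified with $\beeta\in K_F(\omega)$, the cross term $4\intO A^{\alpha3\sigma3}(0)\,\estres^2\,\eatres^0(\beeta)\sqrt{a}\,dx$ survives, because $\eatres^0(\beeta)=\tfrac12\partial_\alpha\eta_3+b_\alpha^\sigma\eta_\sigma$ does \emph{not} vanish for $\beeta\in K_F(\omega)$. This couples the inequality to the unknown second-order corrector $\estres^2$, and your $\var$-dependent test function $\bxi^0+\var(\beeta-\bxi^0)+\var\bu^1$ does not explain how it is removed. The paper eliminates it by an explicit auxiliary construction: for $\beeta\in K_F(\omega)$ set $v_\alpha(\beeta)=x_3(2b_\alpha^\sigma\eta_\sigma+\partial_\alpha\eta_3)$, $v_3(\beeta)=0$. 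Since $v_3=0$, both $\pm\bv(\beeta)$ lie in $K(\Omega)$, and inserting them in the $\var^1$ balance yields an \emph{equality} containing exactly the same $\estres^2$ term; subtracting this equality from the $\var^2$ inequality cancels it and simultaneously produces the identity $\eab^1(\beeta)-\eab^0(\bv(\beeta))=-x_3\rab(\beeta)$.

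A second issue: your claim that $\gab(\bxi^1)$ is eliminated by ``an auxiliary equation obtained at lower order'' is incorrect. No such equation is available here. After the subtraction above, the contribution of $\bxi^1$ is $-\tfrac12\intO\a\gab(\bxi^1)\gst(\bxi^1)\sqrt{a}\,dx\le 0$, and an additional non-positive term $-\intO A^{3333}(0)(\edtres^1)^2\sqrt{a}\,dx$ appears; both are simply \emph{dropped}, which preserves the direction of the inequality. This sign argument, not an equation for $\bxi^1$, is what decouples the flexural inequality from the first-order corrector.
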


Before providing the proof of this theorem we need some auxiliary results, which are given in the next lemma.


\begin{lemma}\label{step5}
Assume that $K_0(\omega)\neq\{\bcero\}$. We find that
$$
\eij^0=0 \en \Omega,\ \textrm{and}\ \ \bxi^0\in K_F(\omega).
$$
Moreover, assume that $\bu^1\in K(\Omega)$. Then, there exists a function $\bxi^1=(\xi_i^1)\in K(\omega)$, such that
$$
u_\alpha^1=\xi_\alpha^1 - x_3( \d_\alpha\xi_3^0 + 2 b_\alpha^\sigma \xi_\sigma^0),\qquad
u_3^1=\xi_3^1.
$$
Also, the following first-order terms of the scaled linearized strains  are identified:
\begin{equation}\label{ultima}
\eatres^1=0,\qquad
\edtres^1= - \frac{\lambda}{\lambda + 2\mu} a^{\alpha \beta} \eab^1\ \en \Omega.
\end{equation}
Moreover,
\begin{equation}\label{ref1}
\eab^1=\gab(\bxi^1)- x_3\rab(\bxi^0).
\end{equation}
\end{lemma}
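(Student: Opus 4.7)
The plan for Lemma~\ref{step5} is to exploit the lowest non-trivial order (which, in the flexural regime $p=2$, is $\var^0$) of the inequality (\ref{ecuacion_orden_fuerzas}), together with the cancellations already provided by Lemmas~\ref{step1}--\ref{step2}, to propagate the vanishing-strain conclusions one more order up, and then to integrate out the transverse variable to obtain the claimed representations and regularity. With the flexural scaling $p=2$, the right-hand side forces $f^{i,0}$ and $h^{i,1}$ vanish, so inequality (\ref{5.22b}) produced in the proof of Theorem~\ref{teo_parte2} reduces to
\begin{equation*}
0\le\frac{1}{2}\into\a\,\gst(\bxi^0)\,\gab(\beeta-\bxi^0)\sqrt{a}\,dy\qquad\forall\,\beeta\in K(\omega).
\end{equation*}
Taking $\beeta=\bcero\in K(\omega)$ and invoking the ellipticity (\ref{tensor_a_elip}) forces $\gab(\bxi^0)=\eab^0=0$ in $\omega$; combined with Lemma~\ref{step2} this yields $\eij^0=0$ in $\Omega$.

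Unfolding $\eatres^0=0$ and $\edtres^0=0$ via (\ref{eij_terminos_expansion_u}) gives
\begin{equation*}
\d_3 u_\alpha^1=-\d_\alpha\xi_3^0-2b_\alpha^\sigma\xi_\sigma^0,\qquad \d_3 u_3^1=0,
\end{equation*}
whose right-hand sides are both independent of $x_3$. Integration in $x_3$ produces $u_\alpha^1=\xi_\alpha^1-x_3(\d_\alpha\xi_3^0+2b_\alpha^\sigma\xi_\sigma^0)$ with $\xi_\alpha^1(y):=\tfrac{1}{2}\int_{-1}^{1}u_\alpha^1(y,x_3)\,dx_3$, and $u_3^1=\xi_3^1$. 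The lateral trace $\bu^1=\bcero$ on $\Gamma_0$ holds for every $x_3\in[-1,1]$, forcing $\xi_i^1=0$ and $\d_\alpha\xi_3^0+2b_\alpha^\sigma\xi_\sigma^0=0$ on $\gamma_0$; since $\xi_\sigma^0=0$ on $\gamma_0$ as well, this yields $\d_\nu\xi_3^0=0$ on $\gamma_0$. The hypothesis $\bu^1\in K(\Omega)$ provides $\xi_3^1\ge0$ in $\omega$, so $\bxi^1\in K(\omega)$. For $\xi_3^0\in H^2(\omega)$, I would multiply $\d_\beta u_\alpha^1\in L^2(\Omega)$ by $x_3$ and integrate over $(-1,1)$: this produces $\d_\beta(\d_\alpha\xi_3^0+2b_\alpha^\sigma\xi_\sigma^0)\in L^2(\omega)$, and subtracting the term $\d_\beta(2b_\alpha^\sigma\xi_\sigma^0)\in L^2(\omega)$ (using $\xi_\sigma^0\in H^1(\omega)$) gives $\d_{\alpha\beta}\xi_3^0\in L^2(\omega)$. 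Coupled with $\gab(\bxi^0)=0$, $\xi_3^0\ge0$, and the boundary conditions above, this places $\bxi^0$ in $K_F(\omega)$.

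With $\eij^0=0$ now available, the $\var^0$-order terms in (\ref{ec_step3}), for arbitrary $\bv\in K(\Omega)$, collapse to
\begin{equation*}
0\le\intO A^{ijkl}(0)\,\ekl^1\,\eij^{-1}(\bv)\sqrt{a}\,dx,
\end{equation*}
which is structurally identical to (\ref{ecuacion_integral1}) with $\ekl^0$ replaced by $\ekl^1$. The two-step extraction of Lemma~\ref{step2}---first $v_3=0$ with alternating signs of $v_\sigma$ via Theorem~\ref{th_int_nula}, then $v_\alpha=0$ with $v_3\ge0$ via Theorem~\ref{th_int_post}---then delivers exactly (\ref{ultima}). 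Finally, (\ref{ref1}) is obtained by substituting the representations of $u_\alpha^1$ and $u_3^1$ from the previous paragraph into the expression for $\eab^1$ in (\ref{eij_terminos_expansion_u}) and collecting: the $x_3$-independent contributions regroup into $\gab(\bxi^1)$, while the $x_3$-linear contributions, after using (\ref{b_barra}) to rewrite $\d_\alpha(b_\beta^\sigma\xi_\sigma^0)$ in covariant-derivative form, reconstruct $-\rab(\bxi^0)$ in the form (\ref{rab}).

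The principal technical hurdle is the $H^2$-regularity upgrade: extracting $\xi_3^0\in H^2(\omega)$ from the $H^1(\Omega)$-regularity of $\bu^1$ requires exploiting two distinct moments in $x_3$ (the constant and linear moments of $\bu^1$) to recover first- and second-order tangential derivatives of $\xi_3^0$ respectively, and the Neumann-type trace $\d_\nu\xi_3^0=0$ on $\gamma_0$ must be read off the vanishing of $\bu^1$ on the entire lateral cylinder $\Gamma_0$. The remaining algebra leading to (\ref{ref1})---in particular matching the six terms of (\ref{rab})---is lengthy but entirely mechanical once (\ref{b_barra}) is in hand.
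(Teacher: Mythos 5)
Your proposal is correct and follows essentially the same route as the paper: with the order-zero forces gone, the membrane inequality from Theorem \ref{teo_parte2} becomes homogeneous and the ellipticity (\ref{tensor_a_elip}) gives $\gab(\bxi^0)=0$, hence $\eij^0=0$; integrating the zeroth-order strain relations in $x_3$ yields the stated representation of $\bu^1$, the trace conditions and the $H^2$-regularity placing $\bxi^0\in K_F(\omega)$ and $\bxi^1\in K(\omega)$; the next-order inequality is treated by the same two-step test-function argument (Theorems \ref{th_int_nula} and \ref{th_int_post}) to obtain (\ref{ultima}); and (\ref{ref1}) follows by the same direct computation using (\ref{b_barra}). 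The only cosmetic deviation is that the paper deduces $f^{i,0}=h^{i,1}=0$ from the hypothesis $K_0(\omega)\neq\{\bcero\}$ (to avoid compatibility conditions on the data), whereas you take the flexural force scaling $p=2$ as given, which leads to the same homogeneous inequality.
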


\begin{proof}
For (\ref{ec_paso4}) $\beeta\in K_0(\omega)\setminus\{\bcero\}$, we find that
\begin{align*}
\into p^{i,0}(\eta_i-\xi_i^0) \sqrt{a}dy = \intO f^{i,0}(\eta_i-\xi_i^0) \sqrt{a}dx + \intG h^{i,1} (\eta_i-\xi_i^0) \sqrt{a} d \Gamma\le0.
\end{align*}
Hence, in order to avoid compatibility conditions between the applied forces we must take $f^{i,0}=0$ and $h^{i,1}=0$. Then, taking $\beeta=2\bxi^0$ and $\beeta=\bzero$ in the inequality (\ref{ec_paso4}) leads to
\begin{align*} \nonumber
\into \a \gst(\bxi^0)\gab(\bxi^0)\sqrt{a}dy=0,
\end{align*}
which implies that $\gab(\bxi^0)=0$, that is, $\bxi^0\in V_0(\omega)$. Therefore, by (\ref{et3}), we find that $\eab^0=\gab(\bxi^0)=0$. Moreover, by (\ref{eij_terminos_expansion_u}) and (\ref{edtres_cero}) we have that
\begin{equation*}
\d_3 u_3^1=\edtres^0=0 \en \Omega.
\end{equation*}
By the definition of $\eatres^0$ in (\ref{eij_terminos_expansion_u}) and using (\ref{ea3}), we have that
\begin{align*}
\eatres^0=\frac{1}{2}\left( \d_\alpha \xi^0_3+ \d_3 u_\alpha^1 \right)+ b_\alpha^\sigma \xi^0_\sigma =0,
\end{align*}
hence,
\begin{align*}
\d_3 u_\alpha^1=- \left(\d_\alpha \xi_3^0 + 2b_\alpha^\sigma \xi^0_\sigma\right)  \en \Omega.
\end{align*}
Since, in particular, we are assuming that $\bu^1\in V(\Omega)$ and since $\bxi^0$ is independent of $x_3$ by Theorem \ref{teo_parte1}, there exists a field $\bxi^1\in V(\omega)$ such that
$$
u_\alpha^1= \xi^1_\alpha - x_3 \left(\d_\alpha\xi_3^0 + 2 b_\alpha^\sigma \xi_\sigma^0 \right),\quad
u_3^1=\xi_3^1,
$$
in $\Omega$. Notice that the first equality above gives that $\xi^0_3\in H^2(\Omega)$. Further, since $\xi_\alpha^0=0$ on $\gamma_0$, then $\d_\nu\xi_3^0=0$ on $\gamma_0$. Therefore, we have $\bxi^0\in V_F(\omega)$. Moreover, since $\bu^1\in K(\Omega)$ and $\partial_3 u_3^1=0$, this implies that $u_3^1=\xi_3^1\ge0$. Therefore, $\bxi^1\in K(\omega)$. Since $\eij^0=0$, coming back to the terms multiplied by $\var^0$ (see (\ref{ec_step3})), we have
\begin{align*}
\intO A^{ijkl}(0) \ekl^1\eij^{-1}(\bv)  \sqrt{a} dx\ge0,
\end{align*}
for all $\bv\in V(\Omega)$. Notice that this inequality is analogous to (\ref{ecuacion_integral1}) but now involving the terms $\eij^1$ instead of the terms $\eij^0$. Therefore, using similar arguments as in Lemma \ref{step2}, we deduce the expressions in (\ref{ultima}).
Now by the the definitions in (\ref{eij_terminos_expansion_u}) in terms of $\xi^0_i$ and $\xi^1_i$ and replacing $\d_\beta b_\alpha^\sigma$ terms from (\ref{b_barra}), after some computations  we have that
\begin{align}
&\eab^1= \frac{1}{2} \left( \d_\beta \xi_\alpha^1 + \d_\alpha \xi_\beta^1  \right) - \Gamma_{\alpha\beta}^\sigma \xi_\sigma^1 - b_{\alpha\beta} \xi_3^1%
-x_3\Big( \d_{\alpha\beta}\xi_3^0 - \Gamma_{\alpha\beta}^\sigma\d_\sigma\xi_3^0\nonumber\\
&\quad - b_\alpha^\sigma b_{\sigma\beta}\xi_3^0 %
+ b_\alpha^\sigma\left(\d_\beta\xi_\sigma^0 - \Gamma_{\beta\sigma}^\tau\xi_\tau^0    \right)   + b_\beta^\tau\left( \d_\alpha\xi_\tau^0 - \Gamma_{\alpha\tau}^\sigma \xi_\sigma^0 \right)+ b^\tau_{\beta|\alpha}\xi_\tau^0\Big).\label{ec_paso5}
\end{align}
Note that if $\beeta=(\eta_i)\in H^1(\omega)\times H^1(\omega) \times L^2(\omega),$ then (see (\ref{rab})) it is verified that
$\rho_{\alpha\beta} (\beeta) \in L^2(\Omega)$. Hence, by (\ref{def_gab}) for $\beeta=\bxi^1$ and (\ref{rab}) for $\beeta=\bxi^0$, it follows from (\ref{ec_paso5}) the equality (\ref{ref1}).
\end{proof}


We are now ready to give the proof of Theorem \ref{teo_parte3}.

\begin{proof}[Theorem \ref{teo_parte3}]
Let $p=1$ in (\ref{ecuacion_orden_fuerzas}).
Note that, taking into account the results in the previous lemmas and theorems and, particularly, since $\eij^{-1}=\eij^0=0$, we can go on grouping and canceling the following order terms from the original inequality. Therefore, by grouping the terms multiplied by $\var$, we have
\begin{align}\nonumber
&\intO A^{ijkl}(0)\left(\ekl^1\eij^0(\bv-\bxi^0) + \ekl^2\eij^{-1}(\bv-\bxi^0) \right) \sqrt{a} dx\\
&\ + \intO \tilde{A}^{ijkl,1}\ekl^1\eij^{-1}(\bv-\bxi^0)dx%
\ge\intO f^{i,1}(v_i-\xi_i^0) \sqrt{a}dx + \intG h^{i,2} (v_i-\xi_i^0) \sqrt{a}d\Gamma, \label{ref3}
\end{align}
for all $\bv\in K(\Omega)$. Taking $\bv=\beeta\in K(\omega)$ (which implies that $\bv$ is independent of $x_3$), by (\ref{eij_terminos_expansion}) we obtain
\begin{align*}
&\intO A^{ijkl}(0)\ekl^1\eij^0(\beeta-\bxi^0) \sqrt{a} dx
\ge\intO f^{i,1}(\eta_i-\xi_i^0) \sqrt{a}dx + \intG h^{i,2} (\eta_i-\xi_i^0) \sqrt{a}d\Gamma,
\end{align*}
for all $\beeta\in K(\omega)$. Since $\eatres^1=0$ (see Lemma \ref{step5}), we obtain
\begin{align*}
&\intO A^{ijkl}(0)\ekl^1\eij^0(\beeta-\bxi^0) \sqrt{a} dx\\
&\quad =\intO \left( \lambda a^{\alpha \beta} a^{\sigma \tau} + \mu (a^{\alpha \sigma}a^{\beta \tau} + a^{\alpha \tau}a^{\beta\sigma})  \right) \est^1 \eab^0(\beeta-\bxi^0) \sqrt{a} dx\\
&\qquad+ \intO \lambda a^{\alpha\beta}\edtres^1 \eab^0 (\beeta-\bxi^0)\sqrt{a} dx\\
&\quad \ge \intO f^{i,1} (\eta_i-\xi_i^0)\sqrt{a} dx + \intG h^{i,2}(\eta_i-\xi_i^0) \sqrt{a} d\Gamma ,
\end{align*}
for all $\beeta\in K(\omega)$, which is analogous to the expression obtained in (\ref{ref2}). Therefore, following the same arguments made there, taking into account Lemma \ref{step5} and using (\ref{ref1}),  we find that
\begin{align*}
&\into \a \gst(\bxi^1)\gab(\beeta-\bxi^0)\sqrt{a}dy \ge \intO f^{i,1}(\eta_i-\xi_i^0)\sqrt{a} dx\nonumber\\
&\qquad + \intG h^{i,2}(\eta_i-\xi_i^0) \sqrt{a} d\Gamma ,
\end{align*}
for all $\beeta\in K(\omega)$. Note that, in particular, we are assuming that $K_0(\omega)\neq\{\bcero\}$. Therefore, for $\beeta\in K_0(\omega)\setminus\{\bcero\}$ and since $\gab(\bxi^0)=0$, we have that
\begin{align*}
\intO f^{i,1} (\eta_i-\xi_i^0) \sqrt{a} dx + \intG h^{i,2}(\eta_i-\xi_i^0) \sqrt{a} d\Gamma \le0,
\end{align*}
hence, in order to avoid compatibility conditions between the applied forces we must take $f^{i,1}=0$ and $h^{i,2}=0$.
On one hand, coming back to inequality (\ref{ref3}), we find that
\begin{align*}\nonumber
&\intO A^{ijkl}(0)\left(\ekl^1\eij^0(\bv-\bxi^0) + \ekl^2\eij^{-1}(\bv-\bxi^0) \right) \sqrt{a} dx\\
&\qquad+ \intO \tilde{A}^{ijkl,1}\ekl^1\eij^{-1}(\bv-\bxi^0)dx\ge0,
\end{align*}
for all $\bv\in K(\Omega)$. Given $\beeta\in K_F(\omega)$, we define $\bv(\beeta)=(v_i(\beeta))\in K(\Omega)$ as
\begin{equation*}
v_\alpha(\beeta):= x_3 \left( 2b_\alpha^\sigma \eta_\sigma + \d_\alpha\eta_3 \right),\quad
v_3(\beeta):=0,
\end{equation*}
and take $\bv=\bv(\beeta)$ in the previous inequality, thus leading to (see (\ref{eij_terminos_expansion}))
\begin{align}\nonumber
&\intO A^{ijkl}(0)\ekl^1\eij^0(\bv(\beeta)) \sqrt{a}dx + 4\intO  A^{\alpha 3\sigma 3}(0)\estres^2\left( b_\alpha^\tau\eta_\tau + \frac{1}{2} \d_\alpha\eta_3   \right) \sqrt{a} dx
\\\nonumber
& \qquad + 4\intO \tilde{A}^{\alpha 3 \sigma 3,1}\estres^1\left( b_\alpha^\tau\eta_\tau + \frac{1}{2} \d_\alpha\eta_3   \right)dx\ge0,
\end{align}
for all $\beeta\in K_F(\omega)$. Now, if we repeat the previous process by using $-\bv(\beeta)$ which still belongs to $K(\Omega)$, since $v_3(\beeta)=0$, we obtain the opposite inequality, and therefore, the equality holds:
\begin{align}
&\intO A^{ijkl}(0)\ekl^1\eij^0(\bv(\beeta)) \sqrt{a}dx + 4\intO  A^{\alpha 3\sigma 3}(0)\estres^2\left( b_\alpha^\tau\eta_\tau + \frac{1}{2} \d_\alpha\eta_3   \right) \sqrt{a} dx\nonumber\\
& \qquad + 4\intO \tilde{A}^{\alpha 3 \sigma 3,1}\estres^1\left( b_\alpha^\tau\eta_\tau + \frac{1}{2} \d_\alpha\eta_3   \right)dx=0.\label{x3}
\end{align}
On the other hand, let $p=2$ in  (\ref{ecuacion_orden_fuerzas}). Grouping the terms multiplied by $\var^2$ and using the results in lemmas \ref{step1} and \ref{step5}, we find that
\begin{align*}
&\var^{-1}\intO A^{ijkl}(0) \left(\ekl^1\eij^0(\bv)+\ekl^2\eij^{-1}(\bv)\right)\sqrt{a}dx%
 + \var^{-1}\intO \tilde{A}^{ijkl,1}\ekl^1\eij^{-1}(\bv)dx\\
&\ +\intO A^{ijkl}(0) \left(\ekl^1(\eij^1(\bv)-\eij^1) + \ekl^2\eij^0(\bv-\bxi^0) + \ekl^{3}\eij^{-1}(\bv-\bxi^0)\right)\sqrt{a}dx\\
&\ + \intO \tilde{A}^{ijkl,1}\left(\ekl^1\eij^0(\bv-\bxi^0) + \ekl^2\eij^{-1}(\bv-\bxi^0)   \right) dx +
\intO \tilde{A}^{ijkl,2} \ekl^1\eij^{-1}(\bv-\bxi^0) dx\\
&  \ge \intO f^{i,2} (v_i-\xi_i^0) \sqrt{a} dx + \intG h^{i,3}(v_i-\xi_i^0) \sqrt{a} d\Gamma ,
\end{align*}
for all $\bv\in K(\Omega)$. Consider now any $\bv$ which can be identified with a function $\beeta\in K_F(\omega)$.  By using (\ref{eij_terminos_expansion}) we deduce
$$
\eij^{-1}(\beeta)=0,\ \eab^0(\beeta)=\gab(\beeta)=0,\ \edtres^0(\beeta)=0.
$$
Hence by taking into account again the results in lemmas \ref{step1} and \ref{step5} (particularly, recall that $\estres^1=0$), we have
\begin{align*}
&\intO A^{ijkl}(0) \ekl^1(\eij^1(\beeta)-\eij^1)\sqrt{a}dx \\
&\quad+ 4\intO A^{\alpha 3 \sigma 3}(0) \estres^2 \left(  b_\alpha^\tau\eta_\tau + \frac{1}{2} \d_\alpha\eta_3   \right) \sqrt{a}dx\ge\into p^{i,2} (\eta_i-\xi_i^0) \sqrt{a} dy,
\end{align*}
 for all $\beeta\in K_F( \omega)$, where $p^{i,2}$ is defined  in (\ref{p2}). By subtracting (\ref{x3}), we obtain
\begin{align}\label{ref5}
&\intO A^{ijkl}(0) \ekl^1 \left(\eij^1(\beeta) - \eij^0(\bv(\beeta))-\eij^1\right) \sqrt{a}dx%
\ge \into p^{i,2} (\eta_i-\xi_i^0) \sqrt{a} dy ,
\end{align}
for all $\beeta\in K_F( \omega)$. Now,  by having in mind Lemma \ref{step5} and (\ref{eij_terminos_expansion})  we have that
\begin{align*}
&A^{ijkl}(0)\ekl^1 \left( \eij^1(\beeta) - \eij^0(\bv(\beeta))-\eij^1\right)\\%
&\ = A^{\alpha\beta\sigma\tau}(0)\est^1\left(\eab^1 (\beeta)- \eab^0(\bv(\beeta))-\eab^1 \right)\\
&\quad+A^{\alpha\beta 33}(0)\edtres^1\left(\eab^1 (\beeta)- \eab^0(\bv(\beeta))-\eab^1\right)%
-A^{3333}(0)\edtres^1\edtres^1.
\end{align*}
Note that the contribution of the last term above to the left-hand side of (\ref{ref5}) can be removed while keeping the inequality.
Furthermore, by (\ref{eij_terminos_expansion}) we also find that
\begin{align*}
&\eab^1(\beeta)-\eab^0(\bv(\beeta))%
= x_3 \left( b^\sigma_{\beta|\alpha} \eta_\sigma + b_\alpha^\sigma b_{\sigma\beta}\eta_3  \right)  \\
&\quad-x_3 \left(\d_\alpha(b_\beta^\tau \eta_\tau)+\d_\beta(b_\alpha^\sigma\eta_\sigma) + \d_{\alpha\beta}\eta_3 - \Gamma_{\alpha\beta}^\sigma \d_\sigma \eta_3 -2\Gamma_{\alpha\beta}^\sigma b_\sigma^\tau \eta_\tau   \right),
\end{align*}
and making some calculations we conclude that
\begin{align*}
\eab^1(\beeta) - \eab^0(\bv(\beeta)) = - x_3 \rab(\beeta), \ \forall \beeta\in V_F(\omega).
\end{align*}
Therefore, by the arguments and calculations above, and using (\ref{ref1}) and(\ref{ultima}), the left-hand side of the inequality (\ref{ref5}) leads to
\begin{align*}
&\intO A^{ijkl}(0) \ekl^1 \left(\eij^1(\beeta) - \eij^0(\bv(\beeta))-\eij^1\right) \sqrt{a}dx\nonumber\\
& \quad  \le\intO \left( \lambda a^{\alpha \beta} a^{\sigma \tau} + \mu (a^{\alpha \sigma}a^{\beta \tau} + a^{\alpha \tau}a^{\beta\sigma})  \right)\est^1\left( -x_3 \rab(\beeta-\bxi^0)-\gab(\bxi^1)\right) \sqrt{a} dx \nonumber\\
&\qquad + \intO \lambda a^{\alpha\beta}\edtres^1 \left(-x_3 \rab(\beeta-\bxi^0)-\gab(\bxi^1) \right)\sqrt{a} dx \nonumber\\
&\quad=\intO \frac{x_3^2}{2}\a \rst(\bxi^0)\rab(\beeta-\bxi^0)\sqrt{a}dx%
-\intO \frac{1}{2}\a \gab(\bxi^1)\gst(\bxi^1)\sqrt{a}dx\\
&\quad\le \frac{1}{3}\into \a \rst(\bxi^0)\rab(\beeta-\bxi^0)\sqrt{a}dy,
\end{align*}
for all $\beeta\in K_F(\omega)$.  Hence, going back with this result to (\ref{ref5}), we obtain the limit problem formulated in the theorem.
\end{proof}

\begin{remark}
The problem formulated in Theorem \ref{teo_parte3} can be described as the scaled version of the variational formulation of a two-dimensional unilateral contact problem (or, more precisely, obstacle problem) for an elastic flexural shell. The existence and uniqueness of solution is discussed in the next section.
\end{remark}
\begin{remark}
Note that, unlike the non contact case, studied in \cite{Ciarlet4b}, we do not find that $\bxi^1 \in K_0(\omega)$, not even under the assumption that $K_0(\omega)\neq\{\bcero\}$. 
\end{remark}

\section{Existence and uniqueness of the solution of the two-dimensional problems} \label{Existencia}

{ In what follows, we show the existence and uniqueness of solution for some of the two-dimensional limit problems found in the previous section, namely the elliptic membrane shell and the flexural shell. The study of the remaining cases of membranes are more technically involved and will be presented in a forthcoming paper.

We shall present both limit problems in a de-scaled form. The scalings in Section \ref{seccion_dominio_ind} suggest the de-scalings $\xi_i^\var(\by)=\xi_i^0(\by)$ for all $\by\in\bar{\omega}$.
}

\subsection{Elastic elliptic membrane shell contact problem}


{ When the middle surface $S$ of a membrane is elliptic, and $\gamma=\gamma_0$, the right space to find a solution is $V_M(\omega):=H^1_0(\omega)\times H^1_0(\omega)\times L^2(\omega)$. Accordingly, we define $K_M(\omega):=\{\beeta\in V_M(\omega),\ \eta_3\ge0\}$ as the subset of admissible displacements.} For this type of membranes it is verified the two-dimensional Korn's type inequality (see, for example, Theorem 2.7-3, \cite{Ciarlet4b}): there exists a constant $c_M=c_M(\omega,\btheta)$ such that
 \begin{align} \label{Korn_elipticas}
 \left( \sum_\alpha||\eta_\alpha||^2_{1,\omega} + ||\eta_3||_{0,\omega}^2       \right)^{1/2} \leq c_M \left( \sum_{\alpha,\beta} ||\gab(\beeta)||_{0,\omega}^2  \right)^{1/2} \ \forall \beeta\in V_M(\omega).
 \end{align}
{ Note that this implies that $V_0(\omega)=\{\bcero\}$, as required in Theorem \ref{teo_parte2}}.
Therefore, we can enunciate the de-scaled variational formulation of the contact problem for an elastic elliptic membrane shell as follows:

\begin{problem}\label{problema_ab_eps}
Find $\bxi^\var:\omega \longrightarrow \mathbb{R}^3$ such that,
   \begin{align*}
   &\bxi^\var\in K_M(\omega),\
   \var\int_{\omega} \aeps\gst(\bxi^\var)\gab(\beeta-\bxi^\var)\sqrt{a}dy\\
&\qquad   \ge\int_{\omega}p^{i,\var}(\eta_i-\xi_i^\var)\sqrt{a}dy \ \forall \beeta=(\eta_i)\in K_M(\omega),
   \end{align*}
   where,
   \begin{align*}
   &\gab(\beeta):= \frac{1}{2}(\d_\alpha\eta_\beta + \d_\beta\eta_\alpha) - \Gamma_{\alpha\beta}^\sigma\eta_\sigma -b_{\alpha\beta}\eta_3,   \\
   & p^{i,\var}:=\int_{-\var}^{\var}\fb^{i,\var}dx_3^\var +h_+^{i,\var}, \ \textrm{and} \ h_{+}^{i,\var}=\bh^{i,\var}(\cdot,\var),
   \end{align*}
  and where the contravariant components of the fourth order two-dimensional tensor $\aeps,$ are defined as rescaled versions of  (\ref{tensor_a_bidimensional}).
\end{problem}

\begin{theorem} \label{Th_exist_unic_bid_cero}
Let $\omega$  be a domain in $\mathbb{R}^2$, let $\btheta\in\mathcal{C}^2(\bar{\omega};\mathbb{R}^3)$ be an injective mapping such that the two vectors $\ba_\alpha=\d_\alpha\btheta$ are linearly independent at all points of $\bar{\omega}$. Let $\fb^{i,\var}\in L^2(\Omega^\var) $, $\bh^{i,\var}\in L^2(\Gamma_+^\var)$. Then the Problem \ref{problema_ab_eps}, has a unique solution  $\bxi^\var\in K_M(\omega)$.
\end{theorem}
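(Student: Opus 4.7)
The plan is to cast Problem \ref{problema_ab_eps} as an elliptic variational inequality of the first kind posed on the Hilbert space $V_M(\omega)=H^1_0(\omega)\times H^1_0(\omega)\times L^2(\omega)$, and then invoke the classical Lions--Stampacchia theorem. To that end, define
\begin{equation*}
B^\var(\bxi,\beeta):=\var\int_\omega a^{\alpha\beta\sigma\tau,\var}\gamma_{\sigma\tau}(\bxi)\gamma_{\alpha\beta}(\beeta)\sqrt{a}\,dy,\qquad L^\var(\beeta):=\int_\omega p^{i,\var}\eta_i\sqrt{a}\,dy,
\end{equation*}
so the problem reads: find $\bxi^\var\in K_M(\omega)$ with $B^\var(\bxi^\var,\beeta-\bxi^\var)\ge L^\var(\beeta-\bxi^\var)$ for all $\beeta\in K_M(\omega)$.

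First I would verify that $K_M(\omega)$ is a non-empty (it contains $\bcero$), convex and strongly closed subset of $V_M(\omega)$; closedness follows because the constraint $\eta_3\ge 0$ is preserved under convergence in $L^2(\omega)$ (pass to an a.e.\ convergent subsequence). Next, I would check continuity of $B^\var$: since $\gamma_{\alpha\beta}:V_M(\omega)\to L^2(\omega)$ is continuous (the first-derivative terms involve only $\eta_\alpha\in H^1_0(\omega)$ while $\eta_3$ enters only through the bounded zeroth-order coefficient $b_{\alpha\beta}$) and the tensor components $a^{\alpha\beta\sigma\tau,\var}$ belong to $L^\infty(\omega)$, the Cauchy--Schwarz inequality yields boundedness. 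Continuity of the linear form $L^\var$ follows from $p^{i,\var}\in L^2(\omega)$: the surface part $h_+^{i,\var}$ is in $L^2(\omega)$ by hypothesis, and Fubini plus Cauchy--Schwarz show that $\int_{-\var}^{\var}f^{i,\var}dx_3^\var\in L^2(\omega)$ when $f^{i,\var}\in L^2(\Omega^\var)$.

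The core step, and the only one that is not formal, is $V_M(\omega)$-ellipticity of $B^\var$. Combining the uniform positive-definiteness of the two-dimensional membrane tensor (the rescaled analogue of (\ref{tensor_a_elip}), which holds with a constant $c_e>0$ independent of $\var$ for $\var$ small enough) with the strict positivity $\sqrt{a}\ge\sqrt{a_0}>0$ provided by Theorem \ref{Th_simbolos2D_3D}, one obtains
\begin{equation*}
B^\var(\beeta,\beeta)\ \ge\ \var\,c_e^{-1}\sqrt{a_0}\sum_{\alpha,\beta}\|\gamma_{\alpha\beta}(\beeta)\|_{0,\omega}^2.
\end{equation*}
At this point the two-dimensional Korn inequality on an elliptic surface (\ref{Korn_elipticas}) is the decisive ingredient: it controls the full $V_M(\omega)$-norm of $\beeta$ by the sum of $L^2$-norms of the $\gamma_{\alpha\beta}(\beeta)$. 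This yields $B^\var(\beeta,\beeta)\ge \alpha^\var\|\beeta\|_{V_M(\omega)}^2$ with $\alpha^\var:=\var\,c_e^{-1}c_M^{-2}\sqrt{a_0}>0$. Symmetry of $B^\var$ is transparent from the symmetries of $a^{\alpha\beta\sigma\tau,\var}$.

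Having established that $B^\var$ is a continuous, symmetric, $V_M(\omega)$-elliptic bilinear form, $L^\var\in V_M(\omega)'$, and $K_M(\omega)$ is a non-empty closed convex subset of the Hilbert space $V_M(\omega)$, the Lions--Stampacchia theorem (see, e.g., \cite{Capelo,HS}) immediately delivers the unique $\bxi^\var\in K_M(\omega)$ solving Problem \ref{problema_ab_eps}. The main obstacle is simply ensuring the Korn-type inequality applies in our setting, which is precisely where the ellipticity assumption on the middle surface $S$ and the full Dirichlet condition $\gamma=\gamma_0$ are used.
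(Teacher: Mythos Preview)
Your proposal is correct and follows essentially the same approach as the paper: define the bilinear form, verify it is continuous and $V_M(\omega)$-elliptic via (\ref{tensor_a_elip}) and the Korn-type inequality (\ref{Korn_elipticas}), observe that $p^{i,\var}\in L^2(\omega)$ and that $K_M(\omega)$ is non-empty, closed and convex, and then appeal to the standard theory of elliptic variational inequalities. Your write-up is simply more explicit about the intermediate verifications (closedness of $K_M(\omega)$, continuity of $L^\var$, and the precise coercivity constant), but the strategy is identical.
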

\begin{proof}
 Let us consider the bilinear symmetric form  $a^\var: V_M(\omega)\times V_M(\omega)\longrightarrow \mathbb{R}$ defined by,
\begin{align*}
a^\var(\bxi^\var,\beeta)&: = \var\int_{\omega} \aeps\gst(\bxi^\var)\gab(\beeta)\sqrt{a}dy,
\end{align*}
 for all $\bxi^\var,\beeta\in V_M(\omega) $ and for each $\var>0$. By the ellipticity of the two-dimensional elasticity tensor, established in (\ref{tensor_a_elip}), and by using a Korn's type inequality (see (\ref{Korn_elipticas})), we find out that $a^\var$ is elliptic and continuous in $V_M(\Omega)$. Also notice that $p^{i,\var}\in L^2(\omega)$ and $K_M(\omega)$ is a non-empty, closed and convex subset of $V_M(\omega)$. Therefore, the Problem \ref{problema_ab_eps}, which can be formulated as the elliptic variational inequality
 $$
 \bxi^\var\in K_M(\omega),\quad a^\var(\bxi^\var,\beeta-\bxi^\var)%
 \ge \int_{\omega}p^{i,\var}(\eta_i-\xi_i^\var)\sqrt{a}dy \ \forall \beeta=(\eta_i)\in K_M(\omega),
 $$
 has a unique solution (see, for example, {\cite{Capelo,Jarusek,HS}}).
\end{proof}

\subsection{Elastic flexural shell contact problem}

Let us consider now that the set $K_F(\omega)$ contains non-zero functions, as required in Theorem \ref{teo_parte3}. Therefore, we can enunciate the de-scaled variational formulation of the contact problem for an elastic flexural shell:

\begin{problem}\label{problema_flexural_eps}
 Find $\bxi^\var:\omega \longrightarrow \mathbb{R}^3$ such that,
    \begin{align*}
    & \bxi^\var\in K_F(\omega),\
   \frac{\var^3}{3}\int_{\omega} \aeps\rst(\bxi^\var)\rab(\beeta-\bxi^\var)\sqrt{a}dy\\
&\qquad   \ge\int_{\omega}p^{i,\var}(\eta_i-\xi_i^\var)\sqrt{a}dy \ \forall \beeta=(\eta_i)\in K_F(\omega),
   \end{align*}
   where,
   \begin{align*}
   &\rho_{\alpha\beta}(\beeta):= \d_{\alpha\beta}\eta_3 - \Gamma_{\alpha\beta}^\sigma \d_\sigma\eta_3 - b_\alpha^\sigma b_{\sigma\beta} \eta_3 + b_\alpha^\sigma (\d_\beta\eta_\sigma- \Gamma_{\beta\sigma}^\tau \eta_\tau)\\
   &\qquad\qquad+ b_\beta^\tau(\d_\alpha\eta_\tau-\Gamma_{\alpha\tau}^\sigma\eta_\sigma ) + b^\tau_{\beta|\alpha} \eta_\tau,   \\
   & p^{i,\var}:=\int_{-\var}^{\var}\fb^{i,\var}dx_3^\var+h_+^{i,\var},\ \textrm{and} \ h_{+}^{i,\var}=\bh^{i,\var}(\cdot,\var),
   \end{align*}
 and where the contravariant components of the fourth order two-dimensional tensor $\aeps$ are defined as rescaled versions of  (\ref{tensor_a_bidimensional}).
\end{problem}

If $\btheta\in\mathcal{C}^3(\bar{\omega};\mathbb{R}^3),$ it is verified the following Korn's type inequality (see, for example, Theorem 2.6-4, \cite{Ciarlet4b}): there exists a constant $c=c(\omega, \gamma_0, \btheta)$ such that
 \begin{align} \label{Korn_flexural}
 \left( \sum_\alpha||\eta_\alpha||^2_{1,\omega} + ||\eta_3||_{2,\omega}^2       \right)^{1/2} \leq c \left( \sum_{\alpha,\beta} ||\rab(\beeta)||_{0,\omega}^2  \right)^{1/2} \ \forall \beeta\in V_F(\omega).
 \end{align}
\begin{theorem} \label{Th_exist_unic_bid_dos}
Let $\omega$  be a domain in $\mathbb{R}^2$, let $\btheta\in\mathcal{C}^3(\bar{\omega};\mathbb{R}^3)$ be an injective mapping such that the two vectors $\ba_\alpha=\d_\alpha\btheta$ are linearly independent at all points of $\bar{\omega}$. Let $\fb^{i,\var}\in L^2(\Omega^\var)$, $\bh^{i,\var}\in L^2(\Gamma_+^\var)$. Then the Problem \ref{problema_flexural_eps}, has a unique solution  $\bxi^\var\in K_F(\omega)$.
\end{theorem}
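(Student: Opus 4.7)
The plan is to mimic the argument used for Theorem \ref{Th_exist_unic_bid_cero} by casting Problem \ref{problema_flexural_eps} as an elliptic variational inequality on the Hilbert space $V_F(\omega)$ and invoking a classical existence and uniqueness theorem (see, e.g., \cite{Capelo,Jarusek,HS}). Accordingly, for each $\var>0$ I introduce the symmetric bilinear form $b^\var:V_F(\omega)\times V_F(\omega)\to\mathbb{R}$ given by
\begin{equation*}
b^\var(\bzeta,\beeta):=\frac{\var^3}{3}\int_\omega \aeps\rst(\bzeta)\rab(\beeta)\sqrt{a}\,dy,
\end{equation*}
together with the linear form $L^\var(\beeta):=\int_\omega p^{i,\var}\eta_i\sqrt{a}\,dy$.

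First I would verify that $b^\var$ is continuous and $V_F(\omega)$-elliptic. Continuity follows from the regularity of $\aeps$, $a$ and $b_\beta^\sigma$ on $\bar{\omega}$, which makes $\beeta\mapsto\rab(\beeta)$ a bounded operator from $V_F(\omega)$ into $L^2(\omega)$. Ellipticity is the core point: combining the uniform pointwise positive-definiteness of the two-dimensional elasticity tensor (see \eqref{tensor_a_elip}), the lower bound $\sqrt{a}\ge\sqrt{a_0}>0$ from Theorem \ref{Th_simbolos2D_3D}, and the Korn-type inequality \eqref{Korn_flexural} for flexural shells, yields
\begin{equation*}
b^\var(\beeta,\beeta)\ge\frac{\var^3}{3c_e}\sqrt{a_0}\sum_{\alpha,\beta}\|\rab(\beeta)\|_{0,\omega}^2
\ge C(\var,\omega,\gamma_0,\btheta)\left(\sum_\alpha\|\eta_\alpha\|_{1,\omega}^2+\|\eta_3\|_{2,\omega}^2\right),
\end{equation*}
for all $\beeta\in V_F(\omega)$, with $C(\var,\omega,\gamma_0,\btheta)>0$.

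Next I would verify that the linear form $L^\var$ is continuous on $V_F(\omega)$: since $\fb^{i,\var}\in L^2(\Omega^\var)$ and $\bh^{i,\var}\in L^2(\Gamma_+^\var)$, Fubini's theorem and the trace theorem give $p^{i,\var}\in L^2(\omega)$, and the continuous embedding $V_F(\omega)\hookrightarrow [L^2(\omega)]^3$ does the rest. It then remains to show that $K_F(\omega)$ is non-empty, convex and closed in $V_F(\omega)$. Non-emptiness is part of the theorem's standing assumption (and $\bcero\in K_F(\omega)$ trivially), and convexity is immediate from the definition. For closedness, if $\beeta_n\to\beeta$ strongly in $V_F(\omega)$, then $\eta_{3,n}\to\eta_3$ in $H^2(\omega)\hookrightarrow C^0(\bar{\omega})$, which preserves the constraint $\eta_3\ge 0$ in $\omega$; the linear constraints $\eta_i=\d_\nu\eta_3=0$ on $\gamma_0$ and $\gab(\beeta)=0$ in $\omega$ are obviously preserved in the limit by continuity of the traces and of $\gab$.

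With these ingredients in hand, the Lions–Stampacchia theorem applies directly and yields the existence of a unique $\bxi^\var\in K_F(\omega)$ satisfying the variational inequality of Problem \ref{problema_flexural_eps}. I expect the main subtlety to be the verification of the closedness of $K_F(\omega)$, specifically the interplay between the obstacle constraint $\eta_3\ge 0$ (which lives in the domain $\omega$, not on its boundary as in the original three-dimensional Signorini problem) and the admissibility constraint $\gab(\beeta)=0$; the embedding $H^2(\omega)\hookrightarrow C^0(\bar{\omega})$ is what ultimately saves the day and makes the obstacle constraint stable under the natural topology of $V_F(\omega)$.
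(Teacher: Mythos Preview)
Your proposal is correct and follows essentially the same approach as the paper: both define the symmetric bilinear form $\frac{\var^3}{3}\int_\omega \aeps\rst(\cdot)\rab(\cdot)\sqrt{a}\,dy$ on $V_F(\omega)$, establish its continuity and ellipticity via \eqref{tensor_a_elip} together with the Korn-type inequality \eqref{Korn_flexural}, observe that $p^{i,\var}\in L^2(\omega)$ and that $K_F(\omega)$ is a non-empty closed convex subset of $V_F(\omega)$, and then invoke the standard theory of elliptic variational inequalities. Your write-up is in fact more detailed than the paper's (which does not spell out the closedness argument or the explicit coercivity constant), but the strategy is identical.
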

\begin{proof}
 Let us consider the symmetric bilinear form $a^\var: V_F(\omega)\times V_F(\omega)\longrightarrow \mathbb{R}$  defined by,
\begin{align*}
a_F^\var(\bxi^\var,\beeta)&: = \frac{\var^3}{3}\int_{\omega} \aeps\rst(\bxi^\var)\rab(\beeta)\sqrt{a}dy,
\end{align*}
for all $\bxi^\var,\beeta\in V_F(\omega) $ and for each $\var>0$. By the ellipticity of the two-dimensional elasticity tensor established in (\ref{tensor_a_elip}), and by using a Korn's type inequality (see (\ref{Korn_flexural}), we find out that $a_F^\var$ is elliptic and continuous in $V_F(\Omega)$. Also notice that $p^{i,\var}\in L^2(\omega)$ and $K_F(\omega)$ is a non-empty, closed and convex subset of $V_F(\omega)$. Therefore, the Problem \ref{problema_flexural_eps}, which can be formulated as the elliptic variational inequality
 $$
 \bxi^\var\in K_F(\omega),\quad a_F^\var(\bxi^\var,\beeta-\bxi^\var)%
 \ge \int_{\omega}p^{i,\var}(\eta_i-\xi_i^\var)\sqrt{a}dy \ \forall \beeta=(\eta_i)\in K_F(\omega),
 $$
 has a unique solution.
\end{proof}

\section{Conclusions} \label{conclusiones}

We have identified two-dimensional variational formulations of obstacle problems for elastic membranes and flexural shells as the approximations of the three-dimensional variational formulation of the scaled, unilateral, frictionless, contact problem of an elastic shell.

To this end we used curvilinear coordinates and the asymptotic expansion method. We have provided an analysis of the existence and uniqueness of solution for these problems but, in the case of the membranes, the proof is limited to the case when it is elliptic. The proof of the existence and uniqueness of solution for the other cases of membranes is an issue for a forthcoming paper.

The asymptotic approaches found need to be fully justified with convergence theorems. Guided by the insight obtained from the formal analysis developed in this paper, these results will be presented in forthcoming papers.


\bibliographystyle{spmpsci}      
\bibliography{biblio_CS}   

%
%

\end{document}